\DeclareMathOperator*{\argmax}{arg\,max}
\DeclareMathOperator*{\argmin}{arg\,min}
\newcommand{\I}{\mathbb{I}}
\newcommand{\R}{\mathbb{R}}
\newcommand{\N}{\mathbb{N}}
\newcommand{\E}{\mathbb{E}}
\renewcommand{\P}{\mathbb{P}}
\newcommand{\Q}{\mathbb{Q}}
\newcommand{\A}{\mathbb{A}}
\newcommand{\T}{\mathcal{T}}
\newcommand{\Tn}{\mathcal{T}_n}
\newcommand{\F}{\mathcal{F}}
\newcommand{\Fn}{\mathcal{F}_n}
\newcommand{\e}{\text{e}}
\renewcommand{\d}{\text{d}}
\newcommand{\D}{\mathcal{D}}
\newcolumntype{P}[1]{>{\centering\arraybackslash}p{#1}}
\newtheorem{remark}{Remark}[section]
\newtheorem{theorem}{Theorem}[section]
\newtheorem{example}{Example}[section]
\theoremstyle{definition}
\newcommand{\footremember}[2]{%
    \footnote{#2}
    \newcounter{#1}
    \setcounter{#1}{\value{footnote}}%
}
\newcommand{\footrecall}[1]{%
    \footnotemark[\value{#1}]%
} 
\begin{document}

\title{A deep learning approach for computations of exposure profiles for high-dimensional Bermudan options}

\author{%
  Kristoffer Andersson\footremember{alley}{Research Group of Scientific Computing, Centrum Wiskunde \& Informatica} 
  \and Cornelis W. Oosterlee\footrecall{alley} \footremember{trailer}{Delft Institute of Applied Mathematics (DIAM), Delft University of Technology}
 }

\maketitle
\begin{abstract}
In this paper, we propose a neural network-based method for approximating expected exposures and potential future exposures of Bermudan options. In a first phase, the method relies on the Deep Optimal Stopping algorithm (DOS) proposed in \cite{DOS}, which learns the optimal stopping rule from Monte-Carlo samples of the underlying risk factors. Cashflow paths are then created by applying the learned stopping strategy on a new set of realizations of the risk factors. Furthermore, in a second phase the cashflow-paths are projected onto the risk factors to obtain approximations of pathwise option values. The regression step is carried out by ordinary least squares as well as neural networks, and it is shown that the latter produces more accurate approximations.

The expected exposure is formulated, both in terms of the cashflow-paths and in terms of the pathwise option values and it is shown that a simple Monte-Carlo average yields accurate approximations in both cases. The potential future exposure is estimated by the empirical $\alpha$-percentile.

Finally, it is shown that the expected exposures, as well as the potential future exposures can be computed under either, the risk neutral measure, or the real world measure, without having to re-train the neural networks.

\end{abstract}
\tableofcontents

\section{Introduction}
The exposure of a financial contract is the maximum amount that an investor stands to lose if the counterparty is unable to fulfill its obligations, for instance, due to a default. This means that, in addition to the market risk, a so-called counterparty credit risk (CCR) needs to be accounted for. Furthermore, the liquidity risk, which is the risk arising from potential costs of unwinding a position, is also closely related to the financial exposure. Over the counter (OTC) derivatives, \textit{i.e.,} contracts written directly between counterparties, instead of through a central clearing party (CCP), are today mainly subject to so-called valuation adjustments (XVA\footnote{ X represents arbitrary letters, \textit{e.g.,} ''C'' for credit valuation adjustment, ''F'' for funding valuation adjustment, etc.}). These valuation adjustments aim to adjust the value of an OTC derivative for certain risk factors, \textit{e.g.,} credit valuation adjustment (CVA), adjusting the value for CCR, funding valuation adjustment (FVA), adjusting for funding cost of an uncollateralized derivative or capital valuation adjustment (KVA), adjusting for future capital costs. The financial exposure is central in calculations of many of the XVAs (for an in-depth overview of XVAs, we refer to \cite{XVA_gregory} and \cite{XVA_green}). In this paper, we therefore focus on computations of financial exposures of options.

For a European style option, the exposure is simply the option value at some future time, given all the financial information available today.  If $t_0$ is the time today and $X_t$ the $d$-dimensional vector of underlying risk factors of an option, we define the exposure of the option, at time $t>t_0$, as the random variable\footnote{ Usually, the exposure is defined as $\max\{V_t(X_t),0\}$, but since we only consider options with non-negative values, the max operator is omitted.} \begin{equation*}
    \text{E}^\text{Eur}_t= V_{t}(X_t).
\end{equation*}
However, for derivatives with early-exercise features, we also need to take into account the possibility that the option has been exercised prior to $t$, sending the exposure to zero. The most well-known of such options is arguably the American option, which gives the holder (or buyer) the right to exercise the option at any time between $t_0$ and maturity $T$. In this paper, the focus is on the Bermudan option which gives the holder the right to exercise at finitely many, predefined exercise dates between $t_0$ and maturity $T$. For early-exercise options, the exposure needs to be adjusted for the possibility that the stopping time $\tau$, representing the instance of time at which the option is exercised, is smaller than or equal to $t$. This leads to the following definition of the exposure 
\begin{equation*}
   \text{E}^{\text{Ber}}_t= V_{t}(X_t)\mathbbm{I}_{\{\tau> t\}},
\end{equation*}
where $\I_{\{\cdot\}}$ is the indicator function. Two of the most common ways to measure the exposure are the expected exposure (EE), which is the expected future value of the exposure, and the potential future exposure (PFE), which is some $\alpha-$percentile of the future exposure (usually $\alpha=97.5\%$ for the upper, and $\alpha=2.5\%$ for the lower tails of the distribution). To accurately compute EE and PFE of a Bermudan option, it is crucial to use an algorithm which is able to accurately approximate, not only $V_t(X_t)$, but also $\I_{\{\tau> t\}}$. 

It is common to compute the EE and the PFE with simulation-based methods, usually from realizations of approximate exposures, from which the distribution of $\text{E}_t$ is approximated by its empirical counterpart. A generic scheme for approximation of exposure profiles is given below:
\begin{enumerate}
    \item At each exercise date, find a representation, $v_t:\R^d\to\R$, of the true value function $V_t(\cdot)$;
    \item Generate trajectories of the underlying risk factors, and at each exercise date, evaluate\footnote{We define a filtered probability space in the next section from which $\omega$ is an element. In this section, $\omega$ should be viewed as an outcome of a random experiment.} $v_t(X_t(\omega))$;
    \item At the earliest exercise date where $v_t(X_t(\omega))\leq g(X_t(\omega))$, where $g$ is
    the pay-off function, set $\tau(\omega)=t$;
    \item The distribution of $\text{E}_t^{\text{Ber}}$, is then approximated by the empirical distribution created by many trajectories of the form $v_t(X_t(\omega))\I_{\{\tau(\omega)>t\}}$. For instance, if the target statistic is the EE, the estimation is the sample mean of the exposure paths.
\end{enumerate}
Step 1 above corresponds to the option valuation problem, which can be tackled by several different methods, all with their own advantages and disadvantages. For instance, the value function can be approximated by solving an associated partial differential equation (PDE), which is done in \textit{e.g.,} \cite{Forsyth},\cite{Reisinger},\cite{Vasquez},\cite{Hout} and \cite{Hout_book}, or the value function can be approximated by a Fourier transform methodology, which is done in \text{e.g.,} \cite{Fang}, \cite{Zy} and \cite{Fang2}. Furthermore, classical tree-based methods such as \cite{tree_broadie}, \cite{Rubinstein} and \cite{Jack}, can be used. These types of methods are, in general, highly accurate but they suffer severely from the curse of dimensionality, meaning that they are computationally feasible only in low dimensions (say up to $d=4$), see \cite{COD}. In higher dimensions, Monte-Carlo-based methods are often used, see \textit{e.g.,} \cite{Binomial_price},\cite{LSM},\cite{BG},\cite{maxcall} and \cite{SGBM}. Monte-Carlo-based methods can generate highly accurate option values at $t_0$, \textit{i.e.,} $v_{t_0}(X_{t_0}=x_{t_0})$, given that all trajectories of the underlying risk factors are initiated at some deterministic $x_{t_0}$. However, at intermediate exercise dates, $v_t(X_t(\omega))$ might not be an equally good approximation, due to the need of cross sectional regression. We show with numerical examples that, even though the approximation is good on average, the option value is underestimated in certain regions, which is compensated by overestimated values in other regions. For European options, this seems to have minor effect on EE and PFE, but for Bermudan options the effect can be significant due to step 3 above. To provide an intuitive understanding of the problem, we give an illustrative example. Assume that $v_t$ underestimates the option value in some region A, which is close to the exercise boundary, and overestimates the option value in some region B, where it is clearly optimal to exercise the option. The effect on the exposure would be an overestimation in region A, since underestimated option values would lead to fewer exercised options. In region B, the exposure would be zero in both cases since all options in that region would be exercised immediately. In total, this would lead to overestimated exposure. In numerical examples this is, of course, more involved and we see typically several regions with different levels of under/overestimated values. This makes the phenomenon difficult to analyze since the effect may lead to underestimated exposure, unchanged exposure (from offsetting effects), or overestimated exposure. In addition to the classical regression methods, with \textit{e.g.,} polynomial basis functions, which are cited above, there are several papers in which a neural network plays the role of the basis functions, see \textit{e.g.,} \cite{Kohler},\cite{Cheridito}, \cite{Laypere} and \cite{SGBMNN}.

In this paper we suggest the use of the Deep Optimal Stopping (DOS) algorithm, proposed in \cite{DOS}, to approximate the optimal stopping strategy. The DOS algorithm approximates the optimal stopping strategy by expressing the stopping time in terms of binary decision functions, which are approximated by deep neural networks. The DOS algorithm is very suitable for exposure computations, since the stopping strategy is computed directly, not as a consequence of the approximate value function, as is the case with most other algorithms. This procedure leads to highly accurate approximations of the exercise boundaries. Furthermore, we propose a neural network-based regression method (NN-regression) to approximate $V_t(\cdot)$, which relies on the stopping strategy, approximated by the DOS algorithm. Although NN-regression is needed in order to fully approximate the distributions of future exposures, it turns out that for some target statistics, it is in fact sufficient to approximate the optimal stopping strategy. This is, for instance, the case for some estimates of the EE. 

Another advantage of the method is that the EE and the PFE easily can be computed under the risk neutral measure, as well as the real world measure. When the EE is used to compute CVA, the calculations should be done under the risk neutral measure, since CVA is a tradable asset and any other measure\footnote{ For fixed numéraire.} would create opportunities of arbitrage. For other applications, such as for computations of KVA (valuation adjustment to account for future capital costs associated to a derivative), the EE should be calculated under the real world measure\footnote{ In this setting, the EE is an expectation under the real world measure, which is conditioned on a filtration generated by the underlying risk factors.}. The reason for this is that the KVA, among other things, depends on the required CCR capital which is a function of the EE, which in this case, ideally, should be calculated under the real world measure. For an explanation of KVA in general, see \textit{e.g.,} \cite{KVA} and for a discussion about the effect of different measures in KVA computations see \cite{PsandQs}.

 Finally, we emphasize that, even though the focus of this paper is to compute exposure profiles, the algorithms proposed are flexible and with small adjustments other kind of risk measures could be computed. The first reason for only studying exposure profiles is, as mentioned above, that it is an important building block in computations (or approximations) of many XVAs as well as expected shortfall. Another reason is that there are a number of similar studies in the existing literature (see \textit{e.g.,} \cite{SGBM_Heston}, \cite{Yanbin}, \cite{SGBM_P}). However, it should be pointed out that the algorithms proposed in this paper are not limited to computations of exposures. As an example, consider the CVA of a Bermudan option, expressed as the difference between the risky (including CCR) and the risk-free (excluding CCR) option values. To accurately compute the risky option value, one needs to adjust the exercise strategy for the risk of a default of the counterparty. Therefore, it is not sufficient to compute the exposure of the option when computing the CVA (for a discussion see \textit{e.g.,} \cite{CVA_BER1}, \cite{CVA_BER2}). With minor adjustments to the proposed algorithm, one could compute this kind of advanced CVA (as opposed to the approximation, which is based on the exposure). 

This paper is structured as follows: In Section \ref{sec2}, the mathematical formulation of a Bermudan option and its exposure profiles are given. Furthermore, the Bermudan option, as well as the exposure profiles are reformulated
to fit the algorithms introduced in later sections. The DOS algorithm is described in Section \ref{learnSD}, and we propose some adjustments to make the training procedure more efficient. In Section \ref{learn_pw_optionvalues}, we present a classical ordinary least squares regression-based method (OLS-regression), as well as the NN-regression to approximate pathwise option values. In Section \ref{EEapprox}, the EE and PFE formulas are reformulated in a way such that they can easily be estimated by a combination of the algorithms presented in
Sections \ref{learnSD} and \ref{learn_pw_optionvalues}, and a simple Monte-Carlo sampling. Finally, in Section \ref{num_res}, numerical results of the different
algorithms, are presented for Bermudan options following the Black--Scholes dynamics as well as the Heston model dynamics.

\section{Problem formulation}\label{sec2}
For $d,N\in\N\setminus\{0\}$, let $X=(X_{t_n})_{n=0}^N$ be an $\R^d-$valued discrete-time Markov process on a complete probability space $(\Omega,\,\mathcal{F},\,\A)$. The outcome set $\Omega$ is the set of all possible realizations of the stochastic economy, $\F$ is a $\sigma-$algebra on $\Omega$ and we define $\Fn$ as the sub-$\sigma$-algebra generated by $(X_{t_m})_{m=0}^n$. With little loss of generality, we restrict ourselves to the case when $X$ is constructed from time snap shots of a continuous-time Markov process at monitoring dates $\{t_0,\,t_1,\,\ldots,\,t_N\}$. The probability measure $\A$ is a generic notation, representing either the real world measure, or the risk neutral measure, denoted $\P$ and $\Q$, respectively. 

If not specifically stated otherwise, equalities and inequalities of random variables should be interpreted in an $\omega$-wise sense.
\subsection{Bermudan options, stopping decisions and exercise regions}\label{Beropt_stopdec_exreg}
A Bermudan option is an exotic derivative that gives the holder the opportunity to exercise the option at a finite number of exercise dates, typically one per month. We define the exercise dates as $\mathbbm{T}=\{t_0,\,t_1,\,\ldots,\,t_N\}$, which for simplicity coincide with the monitoring dates. Furthermore, for $t_n\in\mathbbm{T}$, we let the remaining exercise dates be defined as $\mathbbm{T}_n=\{t_n,\,t_{n+1}\,\ldots,\,t_N\}$. Let $\tau$ be an $X-$stopping time, \textit{i.e.}, a random variable defined on $(\Omega,\,\mathcal{F},\,\A)$, taking on values in $\mathbbm{T}$ such that for all $t_n\in\mathbbm{T}$, it holds that the event $\{\tau = t_n\}\in\Fn$. Assume a risk-free rate $r\in\R$ and let the risk-less savings account process, $M(t)=\e^{r(t-t_0)}$, be our numéraire.
For all $t\in\mathbbm{T}$, let\footnote{ Allowing for different pay-off functions at different exercise dates makes the framework more flexible. We can \textit{e.g.}, let $g_t$ be the pay-off function for an entire netting set of derivatives with different maturities.} $g_t:\R^d\to\R$ be a measurable function which returns the immediate pay-off of the option, if exercised at market state $(t,X_t=x\in\R^d)$. The initial value of a Bermudan option, \textit{i.e.,} the value at market state $(t_0,\,X_{t_0}=x_{t_0}\in\R^d)$, is given by
\begin{equation}
    \label{V0}
    V_{t_0}(x_{t_0}) = \sup_{\tau\in\T}\E_\Q\left[\e^{-r(\tau-t_0)}g_\tau(X_\tau)\,|\,X_{t_0}=x_{t_0}\right],
\end{equation}
where $\T$ is the set of all $X-$stopping times. Assuming the option has not been exercised prior to some $t_n\in\mathbbm{T}$, the option value at $t_n$ is given by 
\begin{equation}
    \label{Vn}
    V_{t_n}(X_{t_n}) = \sup_{\tau\in\Tn}\E_\Q^n\left[\e^{-r(\tau-t_n)}g_\tau(X_\tau)\right],
\end{equation}
where $\Tn$ is the set of all $X-$stopping times greater than or equal to $t_n$ and we define $\E_\A^n[\,\cdot\,]=\E_\A[\,\cdot\,|\Fn]$. To guarantee that \eqref{V0} and \eqref{Vn} are well-posed, we assume that for all $t\in\mathbbm{T}$ it holds that
\begin{equation*}
\label{integrability_g}
    \E_\Q\left[|g_t(X_t)|\right]<\infty.
\end{equation*}

To concretize \eqref{V0} and \eqref{Vn}, we view the problem from the holder's perspective. At each exercise date, $t_n\in\mathbbm{T}$, the holder of the option is facing the decision whether or not to exercise the option, and receive the immediate pay-off. Due to the Markovian nature of $X$, the decisions are of Markov-type (Markov decision process (MDP)), meaning that all the information needed to make an optimal decision\footnote{In the sense of using a decision policy such that the supremum in \eqref{V0} or \eqref{Vn} is attained.} is contained in the current market state, $(t_n,\,X_{t_n})$. With this in mind, we define for each $t_n\in\mathbbm{T}$, the exercise region, $\mathcal{E}_n$, in which it is optimal to exercise, and the continuation region, $\mathcal{C}_n$, in which it is optimal to hold on, by
\begin{equation*}
    \mathcal{E}_n = \left\{x\in\mathbbm{R}^d\,|\,V_{t_n}(x)=g_{t_n}(x)\right\},\quad\mathcal{C}_n = \left\{x\in\mathbbm{R}^d\,|\,V_{t_n}(x)>g_{t_n}(x)\right\}.
\end{equation*}
Note that $\mathcal{E}_n\cup\mathcal{C}_n=\R^d$ and $\mathcal{E}_n\cap\mathcal{C}_n=\emptyset$. 

Below, we give a short motivation for how these decisions can be expressed mathematically and how we can formulate a stopping time in terms of (stopping) decisions at each exercise date. For a more detailed motivation we refer to \cite{DOS}. We introduce the following notation for measurable functions\footnote{We assume measurable spaces $(A_1,\mathcal{A}_1)$ and $(A_2,\mathcal{A}_2)$ and measurable functions with respect to $\sigma$-algebras $\mathcal{A}_1$ and $\mathcal{A}_2$. This assumption holds in all cases in this paper.}
\begin{equation*}
    \D(A_1;A_2)=\{\,f\colon A_1\to A_2\,|\,f\ \text{measurable}\}. 
\end{equation*}
Furthermore, for $P\in\N$, let $\D(A;B)^P$ denote the $P:$th Cartesian product of the set $\D(A_1;A_2)$. Define the decision functions $f_0,\,f_1,\,\ldots,\,f_{N}\in \D(\R^d;\,{\{0,1\}})$, with $f_N\equiv1$, and denote $\boldsymbol{f}_n=(f_n,\,f_{n+1},\,\ldots,\,f_N)$ with $\boldsymbol{f}=\boldsymbol{f}_0$. An $X-$stopping time can then be defined as
\begin{equation}\label{tauk}
    \tau_n(\boldsymbol{f}_n) = \sum_{m=n}^Nt_mf_m(X_{t_m})\prod_{j=n}^{m-1}(1-f_j(X_{t_j})),
\end{equation}
where the empty product is defined as 1. We emphasize that $\tau_n(\boldsymbol{f}_n)=\tau_n\left(f_n(X_{t_n}),\,f_{n+1}(X_{t_{n+1}})\ldots,\,f_N(X_{t_N})\right)$ but to make the notation cleaner, we do not specify this dependency explicitly. When it is not clear from the context which process the decision function $\boldsymbol{f}_n$ is acting on, we will denote the stopping time by $\tau_n\left(\boldsymbol{f}_n(X)\right)$, where we recall that $X=(X_{t_m})_{m=n}^N$. As a candidate for optimal stopping decisions, we define
 \begin{equation}\label{opt_dec}
    \boldsymbol{f}^* \in \argmax_{\boldsymbol{f}\in\D(\R^d;\,{\{0,1\}})^{N+1}}\E_\Q\left[\e^{-r(\tau(\boldsymbol{f})-t_0)}g_{\tau(\boldsymbol{f})}\left(X_{\tau(\boldsymbol{f})}\right)\,|\,X_{t_0}=x_0\right].
\end{equation}
We define the fair price of an option that follows the strategy constructed by combining \eqref{tauk} and \eqref{opt_dec} as
\begin{equation}\label{Vtaustar}
    V^*_{t_n}(X_{t_n})=\E_\Q^n\left[\e^{-r(\tau_n(\boldsymbol{f}_n^*)-t_n)}g_{\tau_n(\boldsymbol{f}_n^*)}\left(X_{\tau_n(\boldsymbol{f}_n^*)}\right)\right].
\end{equation}
The following theorem states that \eqref{Vtaustar}, in fact, coincides with the fair price of the option as defined in \eqref{Vn}.

\begin{theorem}\label{VstarV}
For all $t_n\in\mathbbm{T}$, and $V$ and $V^*$ as defined in \eqref{Vn} and \eqref{Vtaustar}, respectively, it holds that\footnote{ We recall that equalities and inequalities are in an $\mathbbm{A}-$almost sure sense.}
\begin{equation*}
    V_{t_n}(X_{t_n})=V^*_{t_n}(X_{t_n}).
\end{equation*}
\begin{proof}
Note that $V_{t_N}(X_{t_N})=V^*_{t_N}(X_{t_N})=g_{t_N}(X_{t_N})$. The proof is carried out by induction and we assume that for some $t_{n+1}\in\mathbbm{T}$ it holds that\begin{equation}\label{assumption_1_thm1}V_{t_{n+1}}(X_{t_{n+1}})=V^*_{t_{n+1}}(X_{t_{n+1}}).\end{equation}
We can rewrite $V^*_{t_n}(X_{t_n})$ as
\begin{align}
    V^*_{t_n}(X_{t_n})=&\E_\Q^n\left[\e^{-r(\tau_n(\boldsymbol{f}_n^*)-t_n)}g_{\tau_n(\boldsymbol{f}_n^*)}(X_{\tau_n(\boldsymbol{f}_n^*)})\right]\nonumber\\
    =&g_{t_n}(X_{t_n})f_n^*(X_{t_n})\label{thm_1_E2}+(1-f_n^*(X_{t_n}))\e^{-r(t_{n+1}-t_n)}\nonumber\\
    &\times\E_\Q^n\left[\e^{-r(\tau_{n+1}(\boldsymbol{f}_{n+1}^*)-t_{n+1})}g_{\tau_{n+1}(\boldsymbol{f}_{n+1}^*)}(X_{\tau_{n+1}(\boldsymbol{f}_{n+1}^*)})\right].
\end{align}
By the law of total expectation and the assumption \eqref{assumption_1_thm1}, the last conditional expectation satisfies
\begin{equation}\label{thm_1_E}
    \E_\Q^n\left[\e^{-r(\tau_{n+1}(\boldsymbol{f}_{n+1}^*)-t_{n+1})}g_{\tau_{n+1}(\boldsymbol{f}_{n+1}^*)}(X_{\tau_{n+1}(\boldsymbol{f}_{n+1}^*)})\right]=\E_\Q^n\left[V^*_{t_{n+1}}(X_{t_{n+1}})\right]=\E_\Q^n\left[V_{t_{n+1}}(X_{t_{n+1}})\right].
\end{equation}
We insert the rightmost part of \eqref{thm_1_E} in \eqref{thm_1_E2} and note that $\mathbbm{I}_{\{\,\cdot\,\in\mathcal{E}_n\}}\in\D(\R^d;\,{\{0,1\}})$, which implies that
\begin{align*}
V^*_{t_n}(X_{t_n})\geq& \mathbbm{I}_{\{X_{t_n}\in\mathcal{E}_n\}} g_{t_n}(X_{t_n})+\mathbbm{I}_{\{X_{t_n}\in\mathcal{C}_n\}}\e^{-r(t_{n+1}-t_n)}\E_\Q^n\left[V_{t_{n+1}}(X_{t_{n+1}})\right]\\
=&V_{t_n}(X_{t_n}).
\end{align*}
Moreover, $V^*_{t_n}(X_{t_n})\leq V_{t_n}(X_{t_n})$ and therefore we conclude that $V^*_{t_n}(X_{t_n})= V_{t_n}(X_{t_n})$.
\newline\newline
OR:
\newline\newline
Follows directly from \cite[Theorem 1]{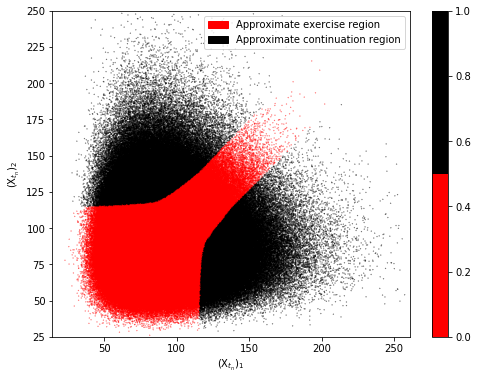}.
\end{proof}
\end{theorem}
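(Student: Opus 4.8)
The plan is to prove $V_{t_n}(X_{t_n})=V^*_{t_n}(X_{t_n})$ by backward induction on $n$, along the lines of the classical dynamic-programming treatment of optimal stopping. The base case $n=N$ is trivial: with $n=N$ the only element of $\Tn$ is the constant time $t_N$, and $f_N\equiv 1$ forces $\tau_N(\boldsymbol{f}_N^*)=t_N$ in \eqref{tauk}, so both sides equal $g_{t_N}(X_{t_N})$. For the inductive step I would assume $V_{t_{n+1}}(X_{t_{n+1}})=V^*_{t_{n+1}}(X_{t_{n+1}})$ and split the sum \eqref{tauk} defining $\tau_n(\boldsymbol{f}_n^*)$ according to the value of the $\F_n$-measurable variable $f_n^*(X_{t_n})\in\{0,1\}$: on $\{f_n^*(X_{t_n})=1\}$ the sum collapses to $t_n$, and on $\{f_n^*(X_{t_n})=0\}$ it equals $\tau_{n+1}(\boldsymbol{f}_{n+1}^*)$. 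Inserting this into \eqref{Vtaustar}, pulling the $\F_n$-measurable factor $f_n^*(X_{t_n})$ out of $\E_\Q^n$, and using the tower property together with \eqref{Vtaustar} at level $n+1$ and the induction hypothesis, one obtains the one-step relation
\[
V^*_{t_n}(X_{t_n})=f_n^*(X_{t_n})\,g_{t_n}(X_{t_n})+\bigl(1-f_n^*(X_{t_n})\bigr)\,\e^{-r(t_{n+1}-t_n)}\,\E_\Q^n\!\left[V_{t_{n+1}}(X_{t_{n+1}})\right].
\]
This part is pure book-keeping.

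The inequality ``$\le$'' is then immediate, since $\tau_n(\boldsymbol{f}_n^*)$ is an $X$-stopping time taking values in $\mathbbm{T}_n$ (each event $\{\tau_n(\boldsymbol{f}_n^*)=t_m\}$ depends only on $f_n^*(X_{t_n}),\dots,f_m^*(X_{t_m})$, hence lies in $\F_m$), so \eqref{Vn} gives $V^*_{t_n}(X_{t_n})\le V_{t_n}(X_{t_n})$. The reverse inequality is the real content, and it is here that one must use that $\boldsymbol{f}^*$ is a \emph{maximizer} in \eqref{opt_dec}, not merely some decision vector. I would compare $\boldsymbol{f}^*$ with the vector $\boldsymbol{f}^{\sharp}$ obtained from it by replacing only the $n$-th component with $\I_{\{\cdot\in\mathcal{E}_n\}}\in\D(\R^d;\{0,1\})$: running the one-step relation for $\boldsymbol{f}^{\sharp}$ and invoking the dynamic programming principle $V_{t_n}(X_{t_n})=\I_{\{X_{t_n}\in\mathcal{E}_n\}}g_{t_n}(X_{t_n})+\I_{\{X_{t_n}\in\mathcal{C}_n\}}\e^{-r(t_{n+1}-t_n)}\E_\Q^n[V_{t_{n+1}}(X_{t_{n+1}})]$ (and again the induction hypothesis) shows that the time-$t_n$ value of $\boldsymbol{f}^{\sharp}$ equals $V_{t_n}(X_{t_n})$. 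Since $\boldsymbol{f}^*$ and $\boldsymbol{f}^{\sharp}$ coincide strictly before $t_n$, the gap between their $t_0$-values in \eqref{opt_dec} works out to $\E_\Q$ of $\I_{\{\boldsymbol{f}^*\text{ not stopped before }t_n\}}\,\e^{-r(t_n-t_0)}\bigl(V^*_{t_n}(X_{t_n})-V_{t_n}(X_{t_n})\bigr)$; this is $\le 0$ by the easy inequality, while optimality of $\boldsymbol{f}^*$ makes it $\ge 0$, hence it is zero and the nonpositive integrand vanishes, yielding $V^*_{t_n}(X_{t_n})=V_{t_n}(X_{t_n})$ on the event that $\boldsymbol{f}^*$ has not yet stopped before $t_n$.

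The step I expect to be the real obstacle is precisely this last passage: converting the \emph{global} optimality in \eqref{opt_dec}, an equality of scalars at $t_0$, into the \emph{$\omega$-wise} identity at $t_n$ asserted by the theorem. The perturbation argument only pins down $f_n^*(X_{t_n})$ on the event that $\boldsymbol{f}^*$ has not stopped before $t_n$; on its complement $f_n^*(X_{t_n})$ is unconstrained, so to obtain the identity for the entire law of $X_{t_n}$ one needs some additional input — either an irreducibility/full-support property of $X$ guaranteeing that the ``still alive'' event charges every region relevant to the law of $X_{t_n}$, or a reading of the statement for the canonical minimal maximizer $f_n^*=\I_{\{\cdot\in\mathcal{E}_n\}}$, for which the assertion reduces to the classical fact that the first time $m\ge n$ with $X_{t_m}\in\mathcal{E}_m$ is an optimal stopping time for $V_{t_n}$. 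The most economical honest route, which I would also record, is simply to cite \cite[Theorem 1]{DOS}, where exactly this decision-function representation of the Bermudan value is established.
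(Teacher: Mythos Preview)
Your approach is essentially the paper's: backward induction with base case $n=N$, the one-step decomposition of $V^*_{t_n}$ via $f_n^*(X_{t_n})\in\{0,1\}$, the tower property combined with the induction hypothesis to replace the inner expectation by $\E_\Q^n[V_{t_{n+1}}(X_{t_{n+1}})]$, the trivial inequality $V^*_{t_n}\le V_{t_n}$ from $\tau_n(\boldsymbol{f}_n^*)\in\Tn$, and the comparison with the particular choice $\I_{\{\cdot\in\mathcal{E}_n\}}$ together with the dynamic programming identity for $V_{t_n}$ to obtain $\ge$. The paper also records the alternative of citing \cite[Theorem~1]{DOS}, exactly as you do.

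The one point where you go further than the paper is the $\ge$ step. The paper simply writes ``note that $\I_{\{\cdot\in\mathcal{E}_n\}}\in\D(\R^d;\{0,1\})$, which implies that $V^*_{t_n}(X_{t_n})\ge\ldots$'', i.e.\ it asserts the pointwise inequality directly from the global $t_0$-optimality \eqref{opt_dec} without further justification. Your perturbation argument (replacing only the $n$-th component and comparing $t_0$-values) is a genuine attempt to bridge that gap, and the obstacle you flag --- that this pins down $f_n^*$ only on the event $\{\tau(\boldsymbol{f}^*)\ge t_n\}$ --- is real and is precisely the loose thread the paper leaves hanging. So you have not missed anything relative to the paper; if anything you have been more careful about the same step.
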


\subsection{Exposure profiles}\label{EP_sec}

For $t_n\in\mathbbm{T}$ and $\alpha\in(0,\,1)$, we define the expected exposure (EE) and the potential future exposure (PFE) under the generic probability measure $\A$ as
\begin{align}\label{EE}
        \text{EE}_\A(t_n) &= \E_\A\left[V_{t_n}(X_{t_n})\I_{\{\tau(\boldsymbol{f}^*)> t_n\}}\,|\,X_{t_0}=x_{t_0}\right],\\\label{PFE}
    \text{PFE}_\A^\alpha(t_n) &= \inf\left\{ a\in\R \,|\,\A\left(V_{t_n}(X_{t_n})\I_{\{\tau(\boldsymbol{f}^*)> t_n\}}\leq a\,|\,X_{t_0}=x_{t_0}\right)\geq\alpha\right\}.
\end{align} 
Note that the option value, given by \eqref{Vn}, is a conditional expectation of future cashflows, which is by definition measured with $\Q$. The exposure profiles under the $\P-$measure, on the other hand, are statistics of the option value at some future time, $t_n$, under the assumption that the conditional distribution $X_{t_m}$ conditional on $X_{t_0}=x_{t_0}$ is considered under the $\P-$measure.

In the theorem below, the expected exposures are reformulated in terms of discounted cashflows and the decision functions introduced in Subsection \ref{Beropt_stopdec_exreg}. It will become clear in later sections that this is a more tractable form for the algorithms used in this paper.
\begin{theorem}\label{EEtheorem}
Let $\Q$ and $\P$ be probability measures on the measurable space $(\Omega,\,\F)$ and assume that the laws of X under $\Q$ and $\P$ are absolutely continuous with respect to the Lebesgue measure. Then, the expected exposure, \eqref{EE} under the $\Q-$ and $\P-$measures, satisfies
\begin{align}\label{EEQ}
\emph{EE}_\Q(t_n)=&\E_\Q\left[\e^{-r(\tau_n(\boldsymbol{f}_n^*)-t_n)}g_{\tau_n(\boldsymbol{f}_n^*)}(X_{\tau_n(\boldsymbol{f}_n^*)})\I_{\{\tau(\boldsymbol{f}^*)> t_n\}}\,|\,X_{t_0}=x_{t_0}\right],\\\label{EEP}
\emph{EE}_\P(t_n)=&\E_\Q\left[\e^{-r(\tau_n(\boldsymbol{f}_n^*)-t_n)}g_{\tau_n(\boldsymbol{f}_n^*)}(X_{\tau_n(\boldsymbol{f}_n^*)})\I_{\{\tau(\boldsymbol{f}^*)> t_n\}}l(X_{t_n},\,X_{t_{n-1}},\,\ldots,\,X_{t_0})\,|\,X_{t_0}=x_{t_0}\right],
\end{align}
where $l(y_{n},\,y_{n-1},\,\ldots,\,y_{0})=\prod_{k=1}^{n}\frac{p_{X_{t_k}|X_{t_{k-1}}}(y_{k}|y_{k-1})}{{q_{X_{t_k}|X_{t_{k-1}}}(y_k|y_{k-1})}}$, with $q_{X_{t_k}|X_{t_{k-1}}}(y_k|y_{k-1})$ and\\ $p_{X_{t_k}|X_{t_{k-1}}}(y_k|y_{k-1})$ being transition densities for $X$ under the measures $\Q$ and $\P$, respectively. Note that $l(y_{n},\,y_{n-1},\,\ldots,\,y_{0})$ is the Radon--Nikodym derivative of $\P$ with respect to $\Q$ evaluated at $(y_{n},\,y_{n-1},\,\ldots,\,y_{0})$.
\begin{proof}
We begin by proving \eqref{EEQ}. By combining \eqref{Vtaustar} and \eqref{EE} and setting $\A=\Q$ we obtain
\begin{align*}
    \text{EE}_\Q(t_n) &= \E_\Q\left[\E_\Q^n[\e^{-r(\tau_n(\boldsymbol{f}_n^*)-t_n)}g_{\tau_n(\boldsymbol{f}_n^*)}(X_{\tau_n(\boldsymbol{f}_n^*)})]\I_{\{\tau(\boldsymbol{f}^*)> t_n\}}\,|\,X_{t_0}=x_{t_0}\right]\\
    &=\E_\Q\left[\e^{-r(\tau_n(\boldsymbol{f}_n^*)-t_n)}g_{\tau_n(\boldsymbol{f}_n^*)}(X_{\tau_n(\boldsymbol{f}_n^*)})\I_{\{\tau(\boldsymbol{f}^*)> t_n\}}\,|\,X_{t_0}=x_{t_0}\right],
\end{align*}
where the final step is justified by law of total expectation. The expected 
exposure under the $\P-$measure can be rewritten in the following way
\begin{align}
    \text{EE}_\P(t_n) =& \E_\P\left[\E_\Q^n[\e^{-r(\tau_n(\boldsymbol{f}_n^*)-t_n)}g_{\tau_n(\boldsymbol{f}_n^*)}(X_{\tau_n(\boldsymbol{f}_n^*)})]\I_{\{\tau(\boldsymbol{f}^*)> t_n\}}\,|\,X_{t_0}=x_{t_0}\right]\nonumber\\
    =&\int_{(y_n,\ldots,y_1)\in\R^d\times\cdots\times\R^d}\E_\Q\left[\e^{-r(\tau_n(\boldsymbol{f}_n^*)-t_n)}g_{\tau_n(\boldsymbol{f}_n^*)}(X_{\tau_n(\boldsymbol{f}_n^*)})\,|X_{t_n}=y_n\,\right]\nonumber\\
    &\times \I_{\{\tau(\boldsymbol{f}^*)> t_n\}}p_{X_{t_1},\ldots,\,X_{t_n}}(y_n,\ldots,y_1)\d y_n\cdots\d y_1\nonumber\\
    =&\int_{(y_n,\ldots,y_1)\in\R^d\times\cdots\times\R^d}\E_\Q\left[\e^{-r(\tau_n(\boldsymbol{f}_n^*)-t_n)}g_{\tau_n(\boldsymbol{f}_n^*)}(X_{\tau_n(\boldsymbol{f}_n^*)})\,|X_{t_n}=y_n\,\right]\nonumber\\ \label{int_EE}
    &\times\I_{\{\tau(\boldsymbol{f}^*)> t_n\}}\frac{ p_{X_{t_n},\ldots,\,X_{t_1}}(y_n,\ldots,y_1)}{ q_{X_{t_n},\ldots,\,X_{t_1}}(y_n,\ldots,y_1)} q_{X_{t_n},\ldots,\,X_{t_1}}(y_n,\ldots,y_1)\d y_n\cdots\d y_1,
\end{align}
where $q_{X_{t_n},\ldots,\,X_{t_1}}(y_n,\ldots,y_1)$ and $p_{X_{t_n},\ldots,\,X_{t_1}}(y_n,\ldots,y_1)$ are joint densities of \\$X_{t_n},\ldots,X_{t_1}$ (conditioned on $X_{t_0}=x$) under the measures $\P$ and $\Q$, respectively. The fact that $\P$ and $\Q$ are equivalent, guarantees that the quotient in \eqref{int_EE} is well defined. Furthermore, due to the Markov property of $X$, we have \begin{equation}
    p_{X_{t_n},\ldots,\,X_{t_1}}(y_n,\ldots,y_1)=p_{X_{t_n}|X_{t_{n-1}}}(y_n|y_{n-1})\times\cdots\times p_{X_{t_1}|X_{t_{0}}}(y_1|y_{0}).\label{density_EE}
\end{equation}
The same argumentation holds for $q_{X_{t_n},\ldots,\,X_{t_1}}(y_n,\ldots,y_1)$. The proof is finalized by inserting the product of density functions
\eqref{density_EE} in \eqref{int_EE} and writing the expression as a $\Q$ expectation, and again, use the law of total expectation.
Is this proof too long? Is the statement obvious?
\end{proof}
\end{theorem}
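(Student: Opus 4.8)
The plan is to derive both identities from Theorem~\ref{VstarV}, the tower property of conditional expectation, and a change of measure carried out on the \emph{finite-dimensional} law of $(X_{t_1},\ldots,X_{t_n})$. Two preliminary observations set everything up. First, by Theorem~\ref{VstarV}, $V_{t_n}(X_{t_n})=V^*_{t_n}(X_{t_n})=\E_\Q^n\big[\e^{-r(\tau_n(\boldsymbol{f}_n^*)-t_n)}g_{\tau_n(\boldsymbol{f}_n^*)}(X_{\tau_n(\boldsymbol{f}_n^*)})\big]$, and since $X$ is Markov this conditional expectation is a \emph{fixed} Borel function of $X_{t_n}$ alone, say $\psi(X_{t_n})$, determined purely by the $\Q$-dynamics; it is $\psi$ that plays the role of $V_{t_n}$ in \eqref{EE} regardless of the outer measure. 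Second, unwinding \eqref{tauk}, $\{\tau(\boldsymbol{f}^*)>t_n\}=\bigcap_{j=0}^{n}\{f_j^*(X_{t_j})=0\}$, so $\I_{\{\tau(\boldsymbol{f}^*)>t_n\}}=\prod_{j=0}^{n}\big(1-f_j^*(X_{t_j})\big)$ is $\Fn$-measurable, in fact a Borel function of $(X_{t_0},\ldots,X_{t_n})$.

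For \eqref{EEQ}, take $\A=\Q$ in \eqref{EE}, replace $V_{t_n}(X_{t_n})$ by $\E_\Q^n[\cdots]$, pull the $\Fn$-measurable indicator inside that inner conditional expectation, and collapse the nested $\Q$-expectations by the tower property; this gives \eqref{EEQ} directly.

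For \eqref{EEP}, take $\A=\P$ in \eqref{EE} and substitute the \emph{same} function $\psi$, i.e.\ $\text{EE}_\P(t_n)=\E_\P\big[\psi(X_{t_n})\prod_{j=0}^{n}(1-f_j^*(X_{t_j}))\,\big|\,X_{t_0}=x_{t_0}\big]$ --- the price $V_{t_n}$ is intrinsically a $\Q$-conditional expectation and is not re-interpreted when the outer law is $\P$. The integrand is a bounded times $\Q$-integrable Borel function of the finitely many coordinates $(X_{t_1},\ldots,X_{t_n})$, with $X_{t_0}=x_{t_0}$ fixed, so I would write the $\P$-expectation as an integral against the $\P$-joint density $p_{X_{t_1},\ldots,X_{t_n}}$, multiply and divide by the $\Q$-joint density $q_{X_{t_1},\ldots,X_{t_n}}$ (legitimate since $\P\sim\Q$ and both densities exist by hypothesis), and factor each joint density into a product of one-step transition densities via the Markov property of $X$. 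The ratio of the two products telescopes to $l(X_{t_n},\ldots,X_{t_0})=\prod_{k=1}^{n}p_{X_{t_k}|X_{t_{k-1}}}/q_{X_{t_k}|X_{t_{k-1}}}$; recognising the remaining integral against $q_{X_{t_1},\ldots,X_{t_n}}$ as $\E_\Q[\,\cdot\,|\,X_{t_0}=x_{t_0}]$ and then undoing the inner conditional expectation by the tower property exactly as in the $\Q$-case yields \eqref{EEP}.

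The one genuinely delicate step is the measure change: one must not treat the inner object $\E_\Q^n[\cdots]$ as a $\P$-conditional expectation (it is a fixed function of $X_{t_n}$), and the change of density must be performed on the joint law of $(X_{t_1},\ldots,X_{t_n})$ rather than on path space --- which is exactly what the absolute-continuity and equivalence hypotheses license, and is where the Markov property supplies the telescoping product $l$. Everything else is the tower property and bookkeeping; finiteness of all expectations and validity of the interchange follow from $\E_\Q[|g_t(X_t)|]<\infty$ together with boundedness of the discount factor and the indicator. (Alternatively, one may note that both identities follow quickly from \cite[Theorem~1]{DOS}.)
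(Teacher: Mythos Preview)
Your argument is correct and follows essentially the same route as the paper: replace $V_{t_n}$ by the $\Q$-conditional expectation from Theorem~\ref{VstarV}, collapse nested expectations via the tower property for \eqref{EEQ}, and for \eqref{EEP} write the outer $\P$-expectation as an integral against the joint density, multiply and divide by the $\Q$-density, and factor via the Markov property to obtain $l$. Your added emphasis that $\psi$ is a fixed Borel function of $X_{t_n}$ determined by the $\Q$-dynamics (and must not be reinterpreted under $\P$) is a useful clarification the paper leaves implicit; the parenthetical appeal to \cite[Theorem~1]{DOS}, however, is misplaced --- that result concerns optimality of the stopping decisions (it underlies Theorem~\ref{VstarV}), not the measure-change identity, so it does not by itself give \eqref{EEP}.
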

Theorem \ref{EEtheorem} opens up for approximations of the expected exposure directly from the discounted cashflows. We now consider the special case, when $X$ is described by a diffusion-type stochastic differential equation (SDE). Let $\mu^{\Q}\colon[t_0,t_N]\times\R^d\to\R^d$ and $\sigma\colon[t_0,\,t_N]\times\R^d\to\R^d\times\R^d$ be the drift and diffusion coefficients, respectively, and let $(W^\Q_t)_{t\in[t_0,t_N]}$ be a $d$-dimensional standard Brownian motion under the measure $\Q$. We assume that $\mu^\Q$ and $\sigma$ satisfy the usual conditions (see \textit{e.g.}, \cite{oksendal}) for existence of a unique, strong solution $X$ to
\begin{equation*}
    \d X_t=\mu^\Q(t,\,X_t)\d t+\sigma(t,\,X_t)\d W^\Q_t,\ t\in[t_0,\,t_N];\quad X_{t_0}=x_{t_0}.
\end{equation*}
Let $\mu^{\P}\colon[t_0,t_N]\times\R^d\to\R^d$ and assume that $\sigma(t,\,X_t)$ is invertible and that for $t\in[t_0,t_N]$\footnote{ For a matrix $A\in\R^{d\times d}$, $\|A\|_\text{max}=\max_{ij}|A_{ij}|$.} \\$\|\sigma^{-1}(t,\,X_t)\|_\text{max}$ is bounded almost surely. Then, $(W^\P_t)_{t\in{[t_0,t_N]}}$ given by \begin{equation*}
    \d W_t^\P=-\sigma^{-1}(t,\,X_t)\left(\mu^\P(t,\,X_t)-\mu^\Q(t,\,X_t)\right)\d t + \d W_t^\Q\label{dWPdWQ}
\end{equation*}
is a Brownian motion under the measure $\P$. Furthermore, under the measure $\P$ it holds almost surely that $X$ is described by \begin{equation}\label{dXt}
    \d X_t=\mu^\P(t,\,X_t)\d t+\sigma(t,\,X_t)\d W^\P_t,\ t\in[t_0,\,t_N];\quad X_{t_0}=x_{t_0}.
\end{equation}
As a way to rewrite $\text{EE}_\P(t)$ as a conditional expectation under the measure $\Q$, we define a process $U=(U_t)_{t\in[t_0,t_N]}$, which follows the SDE\begin{equation}\label{dUt}
            \d U_{t}= \mu^{\P}(t,\,U_t)\d t+\sigma(t,\,U_t)\d W_t^\Q,\ t\in[t_0,t_N];\quad U_{t_0}=x_{t_0}.
\end{equation}
The reason for introducing this process is that $U$ has the same distribution under the measure $\Q$ as $X$ has under the measure $\P$.  
We can then express $\text{EE}_\P(t_n)$ with only $\Q-$expectations in the following way
\begin{align}
    \text{EE}_\P(t_n)& =\E_\P\big[V_{t_n}(X_{t_n})\I_{\{\tau(\boldsymbol{f}^*(X))> t_n\}}\,|\,X_{t_0}=x_{t_0}\big]\nonumber\\
    &=\E_\Q\big[V_{t_n}(U_{t_n})\I_{\{\tau\left(\boldsymbol{f}^*(U)\right)> t_n\}}\,|\,U_{t_0}=x_{t_0}\big]\nonumber\\
    &=\E_\Q\bigg[\E_\Q\left[\e^{-r\left(\tau_n(\boldsymbol{f}_n^*(X))-t_n\right)}g_{\tau_n(\boldsymbol{f}_n^*(X))}\left(X_{\tau_n(\boldsymbol{f}_n^*(X))}\right)\,|\,X_{t_n}=U_{t_n}\right]\times\I_{\{\tau(\boldsymbol{f}^*(U))> t_n\}}\,|\,U_{t_0}=x_{t_0}\bigg]\label{innerexp}.
\end{align}

\begin{remark}
 Regarding the equality between the right hand side of the first line and the second line, we want to emphasize that $X$ under the measure $\P$, and $U$ under the measure $\Q$ do not represent the same stochastic process (as they would have done after a change of measure). If they were, then the conditional expectation would change, and the equality would not hold. To enforce the equality to hold we could have corrected with the Radon--Nikodym derivative, and obtained \eqref{EEP}.
 However, to find a way to write $\mathrm{EE}_\P(t_n)$ without having to include the Radon--Nikodym derivative, we introduce another process U, which is distributed in such a way that the equality holds, \textit{i.e.,} the conditional expectation under $\P$, when using $X$, should equal the conditional expectation under $\Q$, when using $U$. For this to hold, it is sufficient that the distribution of $X$ under the measure $\P$ equals the distribution of $U$ under the measure $\Q$, which is satisfied when $X$ follows \eqref{dXt} and $U$ follows \eqref{dUt}.
 \end{remark}

\begin{remark}
The final equality is obtained by the fact that $V_{t_n}(U_{t_n})$ is the option value at the (random) state $(t_n,U_{t_n})$, given by \eqref{Vtaustar}.
\end{remark}

Before we get rid of the inner expectation in \eqref{innerexp}, we need to define a process following \eqref{dUt} on $[t_0,t_n]$, and the dynamics of \eqref{dXt} on $[t_n,t_N]$ with (stochastic) initial condition $X_{t_n}=U_{t_n}$. We denote such a process by $\tilde{X}^{t_n}=(\tilde{X}^{t_n}_t)_{t\in[t_0,t_N]}$,  and conclude that $\tilde{X}^{t_n}$ should satisfy the following SDE
\begin{equation}\label{XPQtn}
    \d \tilde{X}^{t_n}_t=\mu^{\P,\Q,t_n}(t,\,\tilde{X}^{t_n}_t)\d t+\sigma(t,\,\tilde{X}^{t_n}_t)\d W^\Q_t,\ t\in[t_0,\,t_N];\quad \tilde{X}^{t_n}_{t_0}=x_{t_0},
\end{equation}
where $\mu^{\P,\Q,t_n}(t,\cdot)=\mu^\P(t,\cdot)\I_{\{t\leq t_n\}}+\mu^\Q(t,\cdot)\I_{\{t>t_n\}}$. Note that we have implicitly assumed that also $\mu^\P$ satisfies the usual conditions for existence of a unique strong solution, $U$, to \eqref{dUt}. As a consequence of this assumption we are also guaranteed that there exists a unique strong solution, $\tilde{X}^{t_n}$, to \eqref{XPQtn}. We can then use the law of total expectation to obtain
\begin{align}
     \text{EE}_\P(t_n)&=\E_\Q\left[V_{t_n}(\tilde{X}^{t_n}_{t_n})\I_{\{\tau(\boldsymbol{f}^*)> t_n\}}\,|\,\tilde{X}^{t_n}_{t_0}=x_{t_0}\right]\nonumber\\\label{EEP2_anal}
     &= \E_\Q\left[\e^{-r(\tau_n(\boldsymbol{f}_n^*)-t_n)}g_{\tau_n(\boldsymbol{f}_n^*)}\left(\tilde{X}^{t_n}_{\tau_n(\boldsymbol{f}_n^*)}\right)\I_{\{\tau(\boldsymbol{f}^*)> t_n\}}\,|\,\tilde{X}^{t_n}_{t_0}=x_{t_0}\right],
\end{align}
where we remind ourselves that $\tau(\boldsymbol{f}^*)=\tau(\boldsymbol{f}^*(\tilde{X}^{t_n}))$ and $\tau_n(\boldsymbol{f}_n^*)=\tau_n(\boldsymbol{f}_n^*(\tilde{X}^{t_n}))$.

In the next sections we describe a method to approximate $\boldsymbol{f}^*(\cdot)$. It is then straightforward to estimate \eqref{EEQ}, \eqref{EEP} and \eqref{EEP2_anal} by Monte-Carlo sampling. Furthermore, in Section \ref{learn_pw_optionvalues}, we introduce a method to approximate the price function $V_{t_n}(\cdot)$, which makes it straightforward to also approximate the potential future exposure \eqref{PFE}.

\section{Learning stopping decisions}\label{learnSD}
In the first part of this section, we present the DOS algorithm, which was proposed in \cite{DOS}. The idea is to use fully connected neural networks to approximate the decision functions introduced in the previous section. The neural networks are optimized backwards in time with the objective to maximize the expected discounted cashflow at each exercise date. In the second part of this section, we suggest some adjustments that can be done in order to make the optimization more efficient.
\subsection{The Deep Optimal Stopping algorithm}\label{DOSalgo}
As indicated above, the core of the algorithm is to approximate decision functions. To be more precise, for $n\in\{0,\,1,\,\ldots,\,N\}$, the decision function $f_n$ is approximated by a fully connected neural network of the form $f^{\theta_n}_n\colon\mathbbm{R}^d\to\{0,\,1\}$, where $\theta_n\in\mathbbm{R}^{q_n}$ is a vector containing all the $q_n\in\N$ trainable\footnote{ Parameters that are subject to optimization.} parameters in network $n$. We assume that the initial state, $x_0\in\R
^d$, is such that it is sub-optimal to exercise the option at $t_0$, and therefore set $\theta_0$ such that $f_0^{\theta_0}(x_0)=0$ (for a further discussion, see Remark 6 in \cite{DOS}). Since binary decision functions are discontinuous, and therefore unsuitable for gradient-type optimization algorithms, we use as an intermediate step, the neural network $F^{\theta_n}_n\colon\mathbbm{R}^d\to(0,\,1)$. Instead of a binary decision, the output of the neural network $F^{\theta_n}_n$ can be viewed as the probability\footnote{ However the interpretation as a probability may be helpful, one should be careful since it is not a rigorous mathematical statement. It should be clear that there is nothing random about the stopping decisions, since the stopping time is $\F_t-$measurable. It can also be interpreted as a measure on how certain we can be that exercise is optimal.} for exercise to be optimal. This output is then mapped to 1 for values above (or equal to) 0.5, and to 0 otherwise, by defining $f^{\theta_n}_n(\cdot)=\mathfrak{a}\circ F_n^{\theta_n}(\cdot)$, where $\mathfrak{a}(x)=\mathbbm{I}_{\{x\geq1/2\}}$. Our objective is to find $\theta_n$ such that \begin{equation}\label{theoretical_train}
    \E_\Q^n\left[f^{\theta_n}_n(X_{t_n})g_{t_n}(X_{t_n})+(1-f^{\theta_n}_n(X_{t_n}))\e^{-r(\tau_{n+1}(\boldsymbol{f}_{n+1}^*)-t_n)}g_{\tau_{n+1}(\boldsymbol{f}_{n+1}^*)}\left(X_{\tau_{n+1}(\boldsymbol{f}_{n+1}^*)}\right)\right],
\end{equation}
is as close as possible to $V_{t_n}(X_{t_n})$ (in mean squared sense), where $\boldsymbol{f}_{n+1}^*$ is the vector of optimal decision functions, defined in \eqref{opt_dec}. Although \eqref{theoretical_train} is an accurate representation of the optimization problem, it gives us some practical problems. In general, we have no access to either $\boldsymbol{f}_{n+1}^*$ or the distribution of $V_{t_n}(X_{t_n})$ and in most cases the expectation needs to be approximated. We however notice that at maturity, the option value is equal to its intrinsic value, \textit{i.e.}, $V_{t_N}(\cdot)\equiv g_{t_N}(\cdot)$, which implies that $f_N^*\equiv1$ and $\tau_N(\boldsymbol{f}_N^*)=t_N$. With this insight, we can write \eqref{theoretical_train} with $n=N-1$ in the form
\begin{equation}\label{last_step_theoretical}
    \E_\Q^{N-1}\left[f^{\theta_{N-1}}_{N-1}(X_{t_{N-1}})g_{t_{N-1}}(X_{t_{N-1}})+(1-f^{\theta_{N-1}}_{N-1}(X_{t_{N-1}}))\e^{-r(t_N-t_{N-1})}g_{t_N}\left(X_{t_N}\right)\right],\end{equation}
which can be approximated by Monte-Carlo sampling. Given $M\in\N$ samples, distributed as $X$, which for $m\in\{1,2,\ldots,M\}$ is denoted by $x=(x_{t_n}(m))_{n=0}^N$, we can approximate \eqref{last_step_theoretical} by
\begin{equation}\label{last_step_empirical}
    \frac{1}{M}\sum_{m=1}^{M}\left(f^{\theta_{N-1}}_{N-1}(x_{t_{N-1}}(m))g_{t_{N-1}}(x_{t_{N-1}}(m))+(1-f^{\theta_{N-1}}_{N-1}(x_{t_{N-1}}(m)))\e^{-r(t_N-t_{N-1})}g_{t_N}\left(x_{t_N}(m)\right)\right).\end{equation}
Note that the only unknown entity in \eqref{last_step_empirical} is the parameter $\theta_{N-1}$ in the decision function $f_{N-1}^{\theta_{N-1}}$. Furthermore, we note that we want to find $\theta_{N-1}$ such that \eqref{last_step_empirical} is maximized, since it represents the average cashflow in $[t_{N-1},t_N]$. Once $\theta_{N-1}$ is optimized, we use this parameter and find $\theta_{N-2}$ such that the average cashflow on $[t_{N-2},t_N]$ is maximized.

In the next section, we explain the parameters $\theta_n$ and present the structure for the neural networks used in this paper. 

\subsubsection{Specification of the neural networks used}\label{NN_spec}
For completeness, we introduce all the trainable parameters that are contained in each of the parameters $\theta_1,\theta_2,\ldots,\theta_{N-1}$, and present the structure of the networks.
\begin{itemize}
    \item We denote the dimension of the input layers by $\mathfrak{D}^{\text{input}}\in\N$, and we assume the same input dimension for all $n\in\{1,2,\ldots,N-1\}$ networks. The input is assumed to be the market state $x_{t_n}^{\text{train}}\in\R^d$, and hence $\mathfrak{D}^{\text{input}}=d$. However, we can add additional information to the input that is mathematically redundant but helps the training, \textit{e.g.,} the immediate pay-off, to obtain as input $\left(\text{vec}(x_{t_n}^\text{train}(m)),\,g_{t_n}\left(x_{t_n}^\text{train}(m)\right)\right)\in\R^{d+1}$, which would give $\mathfrak{D}^{\text{input}}=d+1$. In \cite{DOS}, the authors claim that by adding the immediate pay-off to the input, the efficiency of the algorithm was improved;
    
    \item For network $n\in\{1,2,\ldots,N-1\}$, we denote the number of layers\footnote{ Input and output layers included.} by $\mathfrak{L}_n\in\N$, and for layer $\ell\in\{1,2,\ldots,\mathfrak{L}_n\}$, the number of nodes by $\mathfrak{N}_{n,\ell}\in\N$. Note that $\mathfrak{N}_{n,1}=\mathfrak{D}^{\text{input}}$;
    
    \item For network $n\in\{1,2,\ldots,N\}$, and layer $\ell\in\{2,3,\ldots,\mathfrak{L}_n\}$ we denote the weight matrix, acting between layers $\ell-1$ and $\ell$, by $w_{n,\ell}\in\R^{\mathfrak{N}_{n,\ell-1}\times\mathfrak{N}_{n,\ell}}$, and the bias vector by $b_{n,\ell}\in\R^{\ell}$;
    
    \item For network $n\in\{1,2,\ldots,N\}$, and layer $\ell\in\{2,3,\ldots,\mathfrak{L}_n\}$ we denote the (scalar) activation function by $a_{n,\ell}\colon\R\to\R$ and the vector activation function by $\boldsymbol{a}_{n,\ell}\colon\R^{\mathfrak{N}_{n,\ell}}\to\R^{\mathfrak{N}_{n,\ell}}$, which for $x=(x_1,x_2,\ldots,x_{\mathfrak{N}_{n,\ell}})$ is defined by 
    \begin{equation*}
        \boldsymbol{a}_{n,\ell}(x)=\begin{pmatrix}a_{n,\ell}(x_1)\\
        \vdots\\
        a_{n,\ell}(x_{\mathfrak{N}_{n,\ell}})\end{pmatrix};
    \end{equation*}
    \item The output of our network should belong to $(0,1)\subset\R$, meaning that the output dimension of our neural network, denoted by $\mathfrak{D}^{\text{output}}$ should equal 1. To enforce the output to only take on values in $(0,1)$, we restrict ourselves to activation functions of the form $a_{n,\mathfrak{L}_{n}}\colon\R\to(0,1)$.
\end{itemize}
Network $n\in\{1,2,\ldots\,N-1\}$ is then defined by \begin{equation}
    F_n^{\theta_n}(\cdot) = L_{n,\mathfrak{L}_n}\circ L_{n,\mathfrak{L}_n-1}\circ\cdots\circ L_{n,1}(\cdot),
\end{equation}
where for $n\in\{1,2,\ldots,N-1\}$ and for $x\in\R^{\mathfrak{L}_{n,\ell-1}}$, the layers are defined as 
\begin{equation*}
L_{n,\ell}(x)=\begin{cases}x,&\text{for }\ell=1,\\ 
    \boldsymbol{a}_{n,\ell}(w_{n,\ell}^Tx+b_{n,\ell}),&\text{for }\ell\geq 2,\end{cases}
\end{equation*}
where $w_{n,\ell}^T$ is the matrix transpose of $w_{n,\ell}$. The trainable parameters of network $n\in\{1,2,\ldots,N-1\}$ are then given by the list\begin{equation*}
    \theta_n=\left\{w_{n,2},b_{n,2},w_{n,3},b_{n,3},\ldots,w_{n,\mathfrak{L}_n}, b_{n,\mathfrak{L}_n}\right\}.
\end{equation*}
Furthermore, since we have $N-1$ neural networks, we denote by \begin{equation*}
    \boldsymbol{\theta}_n=\{\theta_n,\theta_{n+1},\ldots,\theta_{N-1}\}
\end{equation*}
the trainable parameters in the neural networks at exercise dates $\mathbbm{T}_n$ and by $\boldsymbol{\theta}=\boldsymbol{\theta}_1$.

\subsubsection{Training and valuation}\label{train_NN}
The main idea of the training and valuation procedure is to fit the parameters to some training data, and then use the fitted parameters to make informed decisions with respect to some unseen, so-called, valuation data. 

The training part of the algorithm is summarized below in pseudo code.   
\newline
\newline
\textbf{Training phase}:\newline
Sample $M_{\text{train}}\in\mathbbm{N}$ independent realizations of $X$, which for $m\in\{1,\,2,\,\ldots,\,M_{\text{train}}\}$ are denoted  $(x_{t_n}^{\text{train}}(m))_{n=0}^{N}$. At maturity, define the cashflow as $\text{CF}_N(m)=g_{t_N}(x_{t_N}^{\text{train}}(m))$. \newline
For $n=N-1,\,N-2,\,\ldots,\,1$, do the following:
\begin{enumerate}
    \item\label{loss_emp} Find a $\hat{\theta}_{n}\in\R^{q_{n}}$ which approximates
\begin{align*} \hat{\theta}_n^*\in&\argmax_{\theta\in\mathbbm{R}^{q_n}}\bigg(\frac{1}{M_\text{train}}\sum_{m=1}^{M_\text{train}}F_n^{\theta}\big(x_{t_n}^\text{train}(m)\big)g_{t_n}\big(x_{t_n}^\text{train}(m)\big)\\
&+\left(1-F_n^{\theta}\big(x_{t_n}^\text{train}(m)\big)\right)\e^{-r(t_{n+1}-t_n)}\text{CF}_{n+1}(m)\bigg).\end{align*}
In machine learning terminology, this would give an (empirical) loss-function of the form
\begin{equation*}
    L(\theta;x^{\text{train}}) =- \frac{1}{M_\text{train}}\sum_{m=1}^{M_\text{train}}F_n^{\theta}\left(x_{t_n}^\text{train}(m)\right)g_{t_n}\left(x_{t_n}^\text{train}(m)\right)+\left(1-F_n^{\theta}\left(x_{t_n}^\text{train}(m)\right)\right)\e^{-r(t_{n+1}-t_n)}\text{CF}_{n+1}(m).
\end{equation*}
The minus sign in the loss-function transforms the problem from a maximization to minimization, which is the standard formulation in the machine learning community. Note the straightforward relationship between the loss function and the average cashflows in \eqref{last_step_empirical}. In practice, the data is often divided into mini-batches, for which the loss-function is minimized consecutively. 
\item For $m=1,\,2,\,\ldots,\,M_{\text{train}}$, update the discounted cashflow according to: \begin{align*}
    \text{CF}&_{n}(m)\\
    =&f_n^{\hat{\theta}_{n}}\big(x_{t_n}^\text{train}(m)\big)g_{t_n}\big(x_{t_n}^\text{train}(m)\big)+\left(1-f_n^{\hat{\theta}_{n}}\big(x_{t_n}^\text{train}(m)\big)\right)\e^{-r(t_{n+1}-t_n)}\text{CF}_{n+1}(m).
    \end{align*}
\end{enumerate}
The performance of the algorithm is not particularly sensitive to the specific choice of the number of hidden layers, number of nodes, optimization algorithm, etc. Below is a list of the most relevant parameters/structural choices:
\begin{itemize}
    \item Initialization of the trainable parameters, where a typical procedure is to initialize the biases to 0, and sample the weights independently from a normal distribution; 
    \item The activation functions $a_{\ell,n}$, which are used to add a non-linear structure to the neural networks. In our case we have the strict requirement that the activation function of the output layer maps $\R$ to $(0,1)$. This could, however, be relaxed as long as the activation function is both upper and lower bounded, since we can always scale and shift such output to take on values only in $(0,1)$. For a discussion of different activation functions, see \textit{e.g.,} \cite{activation_functions}.;
    \item The batch size, $B_{n}\in\{1,2,\ldots,M_\text{train}\}$, is the number of training samples used for each update of $\theta_n$, \textit{i.e.,} with $B_n=M_\text{train}$, the loss function is of the form defined in step \ref{loss_emp} above. Note that if we want all batches to be of equal size, we need to choose $B_{n}$ to be a multiplier of $M_\text{train}$;
    \item For each update of $\theta_n$, we use an optimization algorithm, for which a common choice is the Adam optimizer, proposed in \cite{Adam}. Depending on the choice of optimization algorithm, there are different parameters related to the specific algorithm to be chosen. One example is the so-called learning rate which decides how much the parameter, $\theta_n$ is adjusted after each batch.
\end{itemize}
Once the parameters, $\{\theta_1,\theta_2,\ldots,\theta_{N-1}\}$, have been optimized we can use the algorithm for valuation.
\newline\newline
\textbf{Valuation phase}:\newline
Sample $M_\text{val}\in\mathbbm{N}$ independent realizations of $X$, denoted $\left(x_{t_n}^\text{val}(m)\right)_{n=0}^N$. We emphasize that the valuation data should be independent from the training data. 
 Denote the vector of decision functions by \begin{equation*}
    \boldsymbol{f}_n^{\boldsymbol{\hat{\theta}}}=\left(f^{\hat{\theta}_n}_n,\,f^{\hat{\theta}_{n+1}}_{n+1},\ldots,f^{\hat{\theta}_{N-1}}_{N-1}\right),\end{equation*}
    and $\boldsymbol{f}^{\boldsymbol{\hat{\theta}}}=\boldsymbol{f}^{\boldsymbol{\hat{\theta}}}_0$. We then obtain for sample $m$, \textit{i.e.,} $x^{\text{val}}(m)$, the following stopping rule at time $t_n$
\begin{equation*}
    \label{taustar}
    \tau_n\left(\boldsymbol{f}_n^{\boldsymbol{\hat{\theta}}}\left(x^{\text{val}}(m)\right) \right) = \sum_{m=n}^Nt_mf^{\hat{\theta}_m}_m\left(x^\text{val}_{t_m}(m)\right)\prod_{j=0}^{m-1}\left(1-f^{\hat{\theta}_j}_j\left(x^\text{val}_{t_j}(m)\right)\right).
\end{equation*} The estimated option value at $t_0$ is then given by\begin{equation}\label{V0_est}
\hat{V}_{t_0}(x_0) = \frac{1}{M_{\text{val}}}\sum_{m=1}^{M_{\text{val}}}\e^{-r\left(\tau\left(\boldsymbol{f}^{\hat{\theta}}\right) - t_0\right)}g_{\tau\left(\boldsymbol{f}^{\boldsymbol{\hat{\theta}}}\right)}\left(x^\text{val}_{\tau(\boldsymbol{f}^{\boldsymbol{\hat{\theta}}})}(m)\right),\end{equation} where we recall that $\tau(\boldsymbol{f}^{\boldsymbol{\hat{\theta}}})=\tau_0\left(\boldsymbol{f}_0^{\boldsymbol{\hat{\theta}}}\left(x^{\text{val}}(m)\right) \right)$. Note that, by construction, any stopping strategy is sub-optimal, implying that the estimate \eqref{V0_est} is biased low. It should be pointed out that it is possible to derive a biased high estimate of \eqref{V0} from a dual formulation of the optimal stopping problem, which is described in \cite{DOS}. In addition, numerical results in \cite{DOS} show a tight interval for the biased low and biased high estimates for a wide range of different problems.

\subsection{Proposed adjustments to the algorithm}
The presentation of the DOS algorithm in \cite{DOS} is in a general form. In addition to the pricing of Bermudan options, the authors considered the non-Markovian problem to optimally stop a fractional Brownian motion (this is done by including also the historical states in the current state of the system). Since the aim of this paper is more specific (to approximate exposure profiles of Bermudan options), it is natural to use more of the known underlying structure of this specific problem. In this section we use some properties of the specific problems, and propose some adjustments to the DOS-algorithm, which make the training procedure more efficient. 
\subsubsection{Reuse of neural network parameters}
The first proposed adjustment is to reuse parameters of neural networks that have already been optimized. We note that for a single Bermudan option (possibly with a high-dimensional underlying asset) the pay-off functions are identical at all exercise dates, \textit{i.e.,} $g_{t_n}=g_{t_m}$ for all $t_n,t_m\in\mathbbm{T}$. In this case the stopping rules at adjacent exercise dates are similar, especially when $t_{n+1}-t_{n}$ is small. We therefore use the stopping strategy at $t_{n+1}$ as an ''initial guess'' for the stopping strategy at $t_n$. This is done by initializing the trainable parameters in network $n$ by the already optimized parameters in network $n+1$, \textit{i.e.,} at $t_n$, initialize $\theta_n$ by $\hat{\theta}_{n+1}$. This allows us to use smaller learning rates leading to a more efficient algorithm. 

\subsubsection{Use simple stopping decisions when possible}
The term simple stopping decisions is loosely defined as stopping decisions that follow directly without any sophisticated optimization algorithm. The most obvious example is when the contract is out-of-the-money, in which case it is never optimal to exercise. For $t_n\in\mathbbm{T}$, we define the set of in-the-money points and out-of-the-money points as \begin{equation*}
\text{ITM}_n=\{x\in\R^d\,|\,g_{t_n}(x)>0\},\quad \text{OTM}_n=\{x\in\R^d\,|\,g_{t_n}(x)=0\},
\end{equation*}
respectively. Another, less obvious insight is that, given a single Bermudan option with identical pay-off functions at all exercise dates, if it is optimal to exercise at $(t_n,x)$, then it is also optimal to exercise at $(t_{n+1},x)$. Or in other words, the exercise region is non-decreasing with time. This statement is formulated as a theorem below.
\begin{theorem}\label{increaseExReg}Define the set of exercise dates by $\{t_0,\ldots,t_n,t_{n+1},\ldots,t_N,\,t_{N}+\Delta \}$, and let $\Delta =t_{n+1}-t_n=t_{N+1}-t_{N}\geq 0$. Note that an equidistant time grid is sufficient, but not necessary for the above to be satisfied. Moreover, assume that\begin{equation*}
    V_{t_n}(\cdot\,;\,t_N)=V_{t_{n+1}}(\cdot\,;t_N+\Delta ),
\end{equation*}
where $t_N$ and $t_N+\Delta $ indicate the maturity of otherwise identical contracts of the form \eqref{Vn}, with $g=g_{t_n}$ for all exercise dates $t_n$. 
Then, for any $\bar{x}\in\mathcal{E}_{n}$, it holds that $\bar{x}\in\mathcal{E}_{n+1}$.
\end{theorem}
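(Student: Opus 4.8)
The plan is to sandwich the value $V_{t_{n+1}}(\bar x;t_N)$ between $g(\bar x)$ from below and from above, forcing equality. Fix $\bar x\in\mathcal{E}_n$, so by definition of the exercise region $V_{t_n}(\bar x;t_N)=g(\bar x)$ (here I write $g=g_{t_m}$ for the common pay-off). The hypothesis of the theorem then gives immediately
\begin{equation*}
   g(\bar x)=V_{t_n}(\bar x;t_N)=V_{t_{n+1}}(\bar x;t_N+\Delta).
\end{equation*}
So it remains only to compare the two contracts that live at $t_{n+1}$: the one maturing at $t_N$ and the one maturing at $t_N+\Delta$.

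The first ingredient is monotonicity of the Bermudan price in the maturity date. The contract maturing at $t_N$, viewed from $t_{n+1}$, has admissible exercise times valued in $\{t_{n+1},\dots,t_N\}$, whereas the contract maturing at $t_N+\Delta$ has admissible exercise times valued in the strictly larger set $\{t_{n+1},\dots,t_N,t_N+\Delta\}$. Since by \eqref{Vn} the value is a supremum of $\E_\Q^{n+1}[\e^{-r(\tau-t_{n+1})}g(X_\tau)]$ over that set of stopping times, enlarging the set can only increase the supremum, hence $V_{t_{n+1}}(\bar x;t_N)\le V_{t_{n+1}}(\bar x;t_N+\Delta)$. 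The second ingredient is the trivial lower bound $V_{t_{n+1}}(\bar x;t_N)\ge g(\bar x)$, obtained by plugging the admissible stopping time $\tau\equiv t_{n+1}$ into \eqref{Vn} (equivalently, $\mathcal{E}_{n+1}\cup\mathcal{C}_{n+1}=\R^d$ and on both sets $V\ge g$).

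Combining the three relations,
\begin{equation*}
   g(\bar x)\;\le\;V_{t_{n+1}}(\bar x;t_N)\;\le\;V_{t_{n+1}}(\bar x;t_N+\Delta)\;=\;g(\bar x),
\end{equation*}
so every inequality is an equality; in particular $V_{t_{n+1}}(\bar x;t_N)=g(\bar x)$, i.e.\ $\bar x\in\mathcal{E}_{n+1}$, as claimed. I do not expect a genuine obstacle here: the only point requiring care is notational bookkeeping — keeping the two maturities $t_N$ and $t_N+\Delta$ distinct and making sure the maturity-monotonicity inequality points the right way — together with the observation that the equidistant-grid condition $\Delta=t_{n+1}-t_n=t_{N+1}-t_N$ is exactly what makes the shifted contract in the hypothesis have exercise dates $\{t_{n+1},\dots,t_N,t_N+\Delta\}$, so that the stationarity assumption $V_{t_n}(\cdot;t_N)=V_{t_{n+1}}(\cdot;t_N+\Delta)$ is the natural one to impose.
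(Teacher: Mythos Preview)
Your proof is correct and follows essentially the same sandwich argument as the paper: from $\bar x\in\mathcal{E}_n$ and the stationarity hypothesis obtain $V_{t_{n+1}}(\bar x;t_N+\Delta)=g(\bar x)$, then use the maturity-monotonicity $V_{t_{n+1}}(\bar x;t_N)\le V_{t_{n+1}}(\bar x;t_N+\Delta)$ together with the trivial lower bound $V_{t_{n+1}}(\bar x;t_N)\ge g(\bar x)$ to force equality. You have simply spelled out the justifications for the two inequalities (supremum over a larger set of stopping times, and the choice $\tau\equiv t_{n+1}$) that the paper leaves implicit.
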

\begin{proof}
Since $\bar{x}\in\mathcal{E}_n$, $V_{t_n}(\bar{x}\,;\,t_N)=g(\bar{x})$ and $V_{t_{n}}(\bar{x}\,;\,t_N)=V_{t_{n+1}}(\bar{x}\,;\,t_N+\Delta )$ we have that $V_{t_{n+1}}(\bar{x}\,;\,t_N+\Delta )=g(\bar{x})$. From \eqref{Vn} we also see that $V_{t_{n+1}}(\bar{x}\,;\,t_N)\leq V_{t_{n+1}}(\bar{x}\,;\,t_N+\Delta )$ and $V_{t_{n+1}}(\bar{x}\,;\,t_N)\geq g(\bar{x})$. Therefore $V_{t_{n+1}}(\bar{x}\,;\,t_N)= g(\bar{x})$ and $\bar{x}\in\mathcal{E}_{n+1}$. 
\end{proof} 
Theorem \ref{increaseExReg} shows that the exercise region is non-decreasing with time, but since the optimization of the neural network parameters is carried out backwards in time we instead use the fact that the continuation region is non-increasing with time. In practice, this leads to the following three alternatives:
\begin{enumerate}
    \item[A1.] Use all training data in the optimization algorithm (as the algorithm is described in Subsection \ref{DOSalgo});\\
    \item[A2.] At $t_n\in\mathbbm{T}$, use the subset of the training data satisfying $x^\text{val}_{t_n}(m)\in\text{ITM}_n$ in the optimization algorithm. Define the decision functions as \begin{equation*}
        f_n^{\hat{\theta}_n}(\cdot)=\mathbbm{I}_{\{g_{t_n}(\cdot)>0\}}(\mathfrak{a}\circ F_n^{\hat{\theta}_n}(\cdot)).\end{equation*}
    To only use ''in the money paths'' is also employed in the Least Squares Method (LSM), proposed in \cite{LSM};
    \item[A3.] At $t_n$ use the subset of the training data $x^\text{train}_{t_n}(m)\in \mathcal{E}_{n+1}$ in the optimization algorithm. Define the decision functions as \begin{equation*}
        f_n^{\hat{\theta}_n}(\cdot)=f_{n+1}^{\hat{\theta}_{n+1}}(\cdot)(\mathfrak{a}\circ F_n^{\hat{\theta}_n}(\cdot)).  \end{equation*}
\end{enumerate}
In Figure \ref{train_points} the three cases above are visualized for a two-dimensional max call option at one of the exercise dates. To the left we have the blue points belonging to $\mathcal{E}_{n+1}$ (used for optimization in A3), the blue and red points belong to $\text{ITM}_n$ (used for optimization in A2) and the blue, red and yellow points are all the available data (used for optimization in A1). To the right, we see the fraction of the total data used in each case at each exercise date.
    \begin{figure}[htp]
  \centering
  \begin{tabular}{ccc}
         \includegraphics[width=82mm]{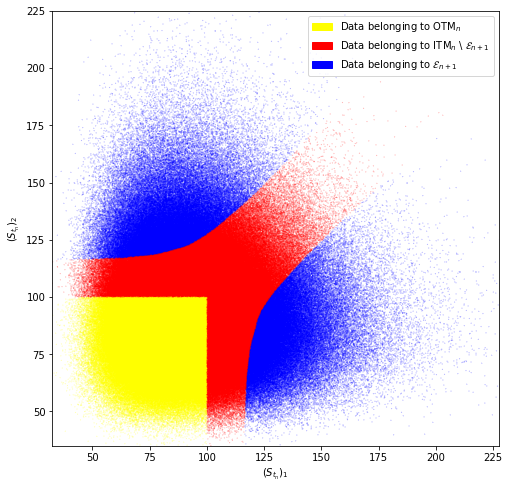}  & \includegraphics[width=82mm]{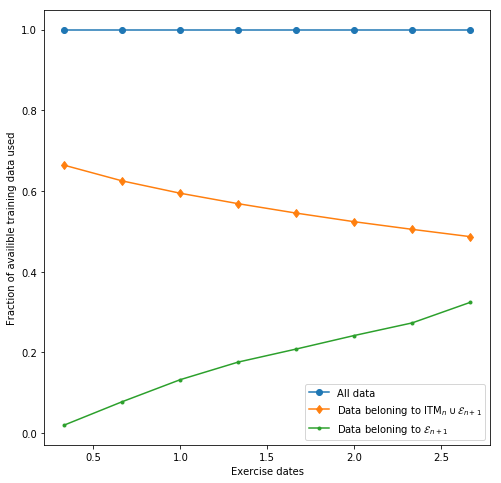}\\
  \end{tabular}
  \caption{\textbf{Left:} Blue points in $\mathcal{E}_{n+1}$ (used for optimization in A3), blue and red points  in $\text{ITM}_n$ (used for optimization in A2) and the blue, red and yellow points are all the available data (used for optimization in A1). \textbf{Right:} The fraction of the total data used in each case at each exercise date.}
  \label{train_points}
  \end{figure}

\section{Learning pathwise option values}\label{learn_pw_optionvalues}
In Section \ref{learnSD} an algorithm to learn stopping decisions was described and \eqref{V0_est} gives an approximation of the option value at time $t_0$, given some deterministic initial state $X_{t_0}=x_{t_0}\in\R^d$. As described in Subsection \ref{EP_sec}, to compute exposure profiles we sometimes need additional information about the future distribution of the option values. In this section, we present two methods to approximate the pathwise option values at all exercise dates. The first method is the well-established Ordinary Least Squares (OLS) regression and the second method is a neural network-based least squares regression. Both methods rely on projections of conditional expectations on a finite-dimensional function space.

\subsection{Formulation of regression problem}\label{reg_prob}
Central for the regression algorithms presented in this section is the cashflow process, $Y=(Y_{t_n})_{n=0}^{N}$, defined as\footnote{Note that $Y$ is the discounted cashflow-process, which in the training phase was denoted by $\text{CF}_n$. The reason for using $Y_{t_n}$ instead in this section is that we want to emphasize that the pathwise valuation problem is, in fact, a standard regression problem in which $X$ and $Y$ usually are used to represent the observation vector, and the response variable, respectively.} \begin{equation}\label{Yn}
    Y_{t_n}=\e^{-r(\tau_n(\boldsymbol{f}^*_n)-t_n)}g_{\tau_n(\boldsymbol{f}^*_n)}\left(X_{\tau_n(\boldsymbol{f}^*_n)}\right),
\end{equation}
where $\tau_n(\cdot)$ and $\boldsymbol{f}^*$ are defined in \eqref{tauk} and \eqref{opt_dec}, respectively. We assume that for $t_n\in\mathbbm{T}$ it holds that \begin{equation*}
    \E_\Q[g_{t_n}(X_{t_n})^2]<\infty,
\end{equation*}
which also implies that $\E_\Q[Y_{t_n}^2]<\infty$.
The following theorem states that the option value, at some $t_n\in\mathbbm{T}$, is equivalent (in $L_2$ sense) to the so-called regression function. Furthermore, we see that the regression function can be obtained by solving a minimization problem. 

\begin{theorem}
Let $Y_{t_n}$ be as defined in \eqref{Yn} and for $h_n\in\D(\R^d;\,\R)$, define the so-called $L_2$-risk as
\begin{equation*}
    \E_\Q\left[|h_n(X_{t_n})-Y_{t_n}|^2\right].
\end{equation*}
It then holds that
\begin{equation*}
    \E_\Q\left[|V_{t_n}(X_{t_n})-Y_{t_n}|^2\right]=\min_{h_n\in\D(\R^d;\,\R)}\E_\Q\left[|h_n(X_{t_n})-Y_{t_n}|^2\right],
\end{equation*}
or equivalently \begin{equation*}
    V_{t_n}(\cdot)\in\argmin_{h_n\in\D(\R^d;\,\R)}\E_\Q\left[|h_n(X_{t_n})-Y_{t_n}|^2\right].
\end{equation*}
\end{theorem}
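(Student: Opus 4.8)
The plan is to recognize this as the standard characterization of conditional expectation as the $L_2$-projection onto the subspace of $\sigma(X_{t_n})$-measurable square-integrable functions, together with the identification of that conditional expectation with the value function via Theorem \ref{VstarV}. First I would recall that $Y_{t_n}\in L_2(\Q)$ by the assumption $\E_\Q[g_{t_n}(X_{t_n})^2]<\infty$ and the fact that $\tau_n(\boldsymbol{f}^*_n)$ takes finitely many values in $\mathbbm{T}_n$, so $Y_{t_n}^2 \le \max_{m\ge n}\e^{-2r(t_m-t_n)}g_{t_m}(X_{t_m})^2$ is dominated by a sum of integrable terms; hence all the $L_2$-risks appearing in the statement are finite. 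Then I would observe that for any $h_n\in\D(\R^d;\R)$ with $\E_\Q[h_n(X_{t_n})^2]<\infty$, writing $m_n(X_{t_n}):=\E_\Q^n[Y_{t_n}]=\E_\Q[Y_{t_n}\mid\Fn]$ (which is $\sigma(X_{t_n})$-measurable by the Markov property, so equals $\tilde m_n(X_{t_n})$ for some measurable $\tilde m_n$), we have the Pythagorean decomposition
\begin{equation*}
\E_\Q\big[|h_n(X_{t_n})-Y_{t_n}|^2\big]=\E_\Q\big[|h_n(X_{t_n})-m_n(X_{t_n})|^2\big]+\E_\Q\big[|m_n(X_{t_n})-Y_{t_n}|^2\big],
\end{equation*}
because the cross term $\E_\Q[(h_n(X_{t_n})-m_n(X_{t_n}))(m_n(X_{t_n})-Y_{t_n})]$ vanishes by the tower property, conditioning on $\Fn$ and using that $h_n(X_{t_n})-m_n(X_{t_n})$ is $\Fn$-measurable. (For $h_n$ not square-integrable the first risk is $+\infty$, so the inequality is trivial.) This shows the minimum over $h_n\in\D(\R^d;\R)$ is attained precisely when $h_n(X_{t_n})=m_n(X_{t_n})$ $\Q$-a.s., i.e.\ the argmin is the equivalence class of $\tilde m_n$.

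Next I would identify $m_n(X_{t_n})$ with $V_{t_n}(X_{t_n})$. By the definition \eqref{Yn} of $Y_{t_n}$ and the definition \eqref{Vtaustar} of $V^*$, we have $m_n(X_{t_n})=\E_\Q^n[Y_{t_n}]=\E_\Q^n[\e^{-r(\tau_n(\boldsymbol{f}^*_n)-t_n)}g_{\tau_n(\boldsymbol{f}^*_n)}(X_{\tau_n(\boldsymbol{f}^*_n)})]=V^*_{t_n}(X_{t_n})$, and Theorem \ref{VstarV} gives $V^*_{t_n}(X_{t_n})=V_{t_n}(X_{t_n})$ $\A$-a.s., hence $\Q$-a.s. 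Substituting $h_n=V_{t_n}$ into the Pythagorean identity makes the first term vanish, yielding $\E_\Q[|V_{t_n}(X_{t_n})-Y_{t_n}|^2]=\E_\Q[|m_n(X_{t_n})-Y_{t_n}|^2]=\min_{h_n}\E_\Q[|h_n(X_{t_n})-Y_{t_n}|^2]$, which is the first displayed claim; the $\argmin$ formulation is then immediate since $V_{t_n}$ attains the minimum and any minimizer agrees with it $\Q$-a.s.

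The only genuinely delicate point is the measurability/Markov bookkeeping: one must be careful that $\E_\Q[Y_{t_n}\mid\Fn]$, a priori only $\Fn$-measurable, can be written as a measurable function of $X_{t_n}$ alone so that it is a legitimate competitor $h_n\in\D(\R^d;\R)$. This follows because $Y_{t_n}$ is a measurable function of $(X_{t_n},X_{t_{n+1}},\dots,X_{t_N})$ and $X$ is Markov, so the conditional expectation depends on $\Fn$ only through $X_{t_n}$; I would invoke the Doob–Dynkin lemma to extract the representing function $\tilde m_n$. Everything else — finiteness of the risks, the vanishing cross term, optimality — is routine once this is in place, and the statement is essentially the defining property of conditional expectation as an orthogonal projection in $L_2$.
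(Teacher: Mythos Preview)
Your proposal is correct and follows essentially the same route as the paper: define $m_n(X_{t_n})=\E_\Q^n[Y_{t_n}]$, expand the $L_2$-risk by adding and subtracting $m_n(X_{t_n})$, kill the cross term with the tower property, and then identify $m_n(X_{t_n})=V^*_{t_n}(X_{t_n})=V_{t_n}(X_{t_n})$ via Theorem~\ref{VstarV}. Your version is in fact a bit more careful than the paper's on the auxiliary points (square-integrability of $Y_{t_n}$, the Doob--Dynkin step ensuring $m_n$ is a genuine competitor in $\D(\R^d;\R)$, and the trivial case $h_n\notin L_2$), but the core argument is identical.
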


\begin{proof}
Define $m_n(x)=\E_\Q\left[Y_{t_n}\,|\,X_{t_n}=x\right]$. For an arbitrary function, $v\colon\R^d\to\R$, it holds that \begin{align*}
    \E_\Q\left[|v(X_{t_n})-Y_{t_n}|^2\right]=&\E_\Q\left[|v(X_{t_n})-m_n(X_{t_n})+m_n(X_{t_n})-Y_{t_n}|^2\right]\\
    =&\E_\Q\left[|v(X_{t_n})-m_n(X_{t_n})|^2\right]+\E_\Q\left[|m_n(X_{t_n})-Y_{t_n}|^2\right]\\
    &+2\E_\Q\left[(v(X_{t_n})-m_n(X_{t_n}))(m_n(X_{t_n})-Y_{t_n})\right].
\end{align*}
By the law of total expectation, the last term satisfies \begin{align*}
    \E_\Q[\left(v(X_{t_n})-m_n(X_{t_n})\right)&\left(m_n(X_{t_n})-Y_{t_n}\right)]\\
    =&\E_\Q\left[\E_\Q^n\left[(v(X_{t_n})-m_n(X_{t_n}))(m_n(X_{t_n})-Y_{t_n})\right]\right]\\
    =&\E_\Q\left[(v(X_{t_n})-m_n(X_{t_n}))(m_n(X_{t_n})-\E_\Q^n\left[Y_{t_n}\right])\right]=0,
\end{align*}
since $\E_\Q^n\left[Y_{t_n}\right]=m_n(X_{t_n})$ by definition of $Y_{t_n}$. This means that \begin{equation}\label{split}
    \E_\Q[|v(X_{t_n})-Y_{t_n}|^2]
    =\E_\Q\left[|v(X_{t_n})-m_n(X_{t_n})|^2\right]+\E_\Q\left[|m_n(X_{t_n})-Y_{t_n}|^2\right],
\end{equation}
which is clearly minimized when $v\equiv m_n$. Also, notice that $m_n(X_{t_n})=V^*_{t_n}(X_{t_n})$, and by Theorem \ref{VstarV}, \begin{equation*}
    V^*_{t_n}(X_{t_n})=V_{t_n}(X_{t_n}),
\end{equation*}
which concludes the proof.
\newline\newline

OR:\newline\newline
This is a straight forward consequence of the fact that the conditional expectation is the (least-squares) projection onto the Markov states, $h_n(X_{t_n})$. 
\end{proof}

In practice, the distribution of $(X,Y)$ is usually unknown. On the other hand, we are often able to generate samples distributed as\footnote{ In fact, we can only generate samples distributed as $(X,\hat{Y})$, where $\hat{Y}$ is the approximate discounted cashflow process obtained by using the neural network-based decision functions instead of the optimal decision functions in \eqref{Yn}. We give a short explanation of how this affects the regression in the end of this section.} $(X,Y)$. We consider some $t_n\in\mathbbm{T}$, and generate $M_{\text{reg}}\in\N$ independent realizations of the regression pair $(X_{t_n},Y_{t_n})$, which we denote by $\left(x^\text{reg}_{t_n}(m),y_{t_n}^\text{reg}(m)\right)_{m=1}^{M_\text{reg}}$. Similarly, we define the empirical $L_2$-risk by \begin{equation*}\label{empiricalL2}
    \frac{1}{M_\text{reg}}\sum_{m=1}^{M_\text{reg}}|h_n(x_{t_n}^\text{reg}(m))-y_{t_n}^\text{reg}(m)|^2.
\end{equation*}
With a fixed sample of regression pairs it is possible to find a function $h\in\D(\R^d;\,\R)$ such that the empirical $L_2$-risk equals zero. However, such a function is not a consistent estimator in general. Therefore, we want to use a smaller class of more regular functions. When choosing the function class, which we denote by $\mathcal{A}_M$, we need to keep two aspects in mind;
\begin{enumerate}
    \item[P1.] It should be ''rich enough'' to be able to approximate $V_{t_n}(\cdot)$ sufficiently accurately,\label{point1}
    \item[P2.] It should not be ''too rich'' since that may cause the empirical $L_2$-risk being an inaccurate approximation of the $L_2$-risk. Since this problem is more severe for smaller $M_\text{reg}$, it is reasonable to have the sample size in mind when choosing the function class, and hence the subscript $M$ on $\mathcal{A}_{M}$, where ''$\text{reg}$'' is dropped for notational convenience. A too rich function class may lead to what is known as overfitting in the machine learning community.\label{point2}
\end{enumerate}
Given a sample and a function class $\mathcal{A}_M$, we define the empirical regression function as
\begin{equation*}
    m_M(\cdot) \in \argmin_{h\in\mathcal{A}_M}\frac{1}{M_{\text{reg}}}\sum_{m=1}^{M_\text{reg}}\left|h\left(x_{t_n}^\text{reg}(m)\right)-y_{t_n}^\text{reg}(m)\right|^2.
\end{equation*}
Since \eqref{split} holds for arbitrary $v$, we can write the $L_2$-risk of the empirical regression function and the option value as 
\begin{equation*}\label{L2riskMkV}
    \E_\Q\left[|m_M(X_{t_n})-V_{t_n}(X_{t_n})|^2\right]=\E_\Q\left[|m_M(X_{t_n})-Y_{t_n}|^2\right]-\E_\Q\left[|V_{t_n}(X_{t_n})-Y_{t_n}|^2\right].
\end{equation*}
This in turn can be written in terms of the so-called estimation error (first term) and the approximation error (second term), \textit{i.e.,}\begin{align}\begin{split}\label{error_Y}
     \E_\Q[|m_M(X_{t_n})-&V_{t_n}(X_{t_n})|^2]\\
     =&\left(\E_\Q\left[|m_M(X_{t_n})-Y_{t_n}|^2\right]-\min_{h\in\mathcal{A}_M}\E_\Q[|h_n(X_{t_n})-Y_{t_n}|^2]\right)\\
     &+\left(\min_{h\in\mathcal{A}_M}\E_\Q[|h_n(X_{t_n})-Y_{t_n}|^2]-\E_\Q\left[|V_{t_n}(X_{t_n})-Y_{t_n}|^2\right]\right)\end{split}
\end{align}
The approximation error measures how well the option value can be estimated by functions in $\mathcal{A}_M$, which corresponds to (P1) above. The estimation error is the difference between the $L_2$-risk of the estimator $m_M$ and the optimal $h$ in $\mathcal{A}_M$, which corresponds to (P2) above. 

There is however, one problem with the approximation error above; we have assumed that we can sample realizations of $(X,Y)$, while we in practice only are able to sample from $(X,\hat{Y})$, with $\hat{Y}=(\hat{Y}_{t})_{t\in{[t_0,t_N]}}$ given by  \begin{equation*}
\label{hatYn}
    \hat{Y}_{t_n}=\e^{-r\left(\tau_n\left(\boldsymbol{f}^{\boldsymbol{\hat{\theta}}}_n\right)-t_n\right)}g_{\tau_n\left(\boldsymbol{f}^{\boldsymbol{\hat{\theta}}}_n\right)}\left(X_{\tau_n\left(\boldsymbol{f}^{\boldsymbol{\hat{\theta}}}_n\right)}\right).
\end{equation*}
By also taking into account that the regression is carried out against an approximation of $Y$, \eqref{error_Y} becomes instead
\begin{align}\begin{split}\label{error_hatY}
     \E_\Q[|m_M(X_{t_n})-&V_{t_n}(X_{t_n})|^2]\\
     \leq&\left(E_\Q\left[|m_M(X_{t_n})-\hat{Y}_{t_n}|^2\right]-\min_{h\in\mathcal{A}_M}\E_\Q[|h_n(X_{t_n})-\hat{Y}_{t_n}|^2]\right)\\
     &+\left(\min_{h\in\mathcal{A}_M}\E_\Q[|h_n(X_{t_n})-Y_{t_n}|^2]-\E_\Q\left[|V_{t_n}(X_{t_n})-Y_{t_n}|^2\right]\right)\\
     &+\left(\min_{h\in\mathcal{A}_M}\E_\Q[|h_n(X_{t_n})-\hat{Y}_{t_n}|^2]-\min_{h\in\mathcal{A}_M}\E_\Q[|h_n(X_{t_n})-Y_{t_n}|^2]\right)\\
     &+\E_\Q[|\hat{Y}_{t_n}-Y_{t_n}|^2].\end{split}
\end{align}
The first two lines in \eqref{error_hatY} are, again, the estimation error and the approximation error, respectively. The third line represents the difference between how well a function in $\mathcal{A}_M$ can approximate $\hat{Y}_{t_n}$ and $Y_{t_n}$ and the final row is the $L_2$-risk of our approximation of the discounted cashflow and the true discounted cashflow. Furthermore, note that the equality in \eqref{error_Y} has changed to an inequality in \eqref{error_hatY}.

In the next section, we introduce the two different types of function classes that are used in this paper.

\subsection{Ordinary least squares regression}\label{OLSsec}
At $t_n\in\mathbbm{T}$, we assume that $V_{t_n}(X_{t_n})$ can be represented by a linear combination of a countable set of $\F_n$-measurable basis functions. We denote by $\{\phi_b\}_{b=0}^\infty$ the basis functions and given optimal parameters $\alpha_{t_n}^{(1)},\,\alpha_{t_n}^{(2)},\ldots$ (in the sense that the $L_2$-risk against $V_{t_n}(X_{t_n})$ is minimized) and define \begin{equation*}\label{vinf}
    v(t_n,X_{t_n})=\sum_{b=1}^\infty\alpha_{t_n}^{(b)}\phi_b(X_{t_n}).
\end{equation*}
For practical purposes we use the first $B\in\N$ basis functions, so that
\begin{equation}\label{vB}
    v_B(t_n,X_{t_n})=\sum_{b=1}^B\alpha_{t_n}^{(b)}\phi_b(X_{t_n}).
\end{equation}
We now want to estimate \eqref{vB} by projecting a sample of realizations of $(X_{t_n},\,Y_{t_n})$ onto the $B$ first basis functions. This procedure is similar to LSM, \cite{LSM}. In the LSM, only ITM samples are used, which is motivated by the fact that it is never optimal to exercise an option that is OTM and the objective is to find the optimal exercise strategy. Furthermore, the authors claim that the number of basis functions needed to obtain an accurate approximation is significantly reduced since the approximation region is reduced by only considering ITM paths. However, this is not possible in our case since we need to approximate the option everywhere\footnote{ By ''everywhere'' we mean the region in which the distribution of $X_{t_n}$ has positive density. Of course, this is in many cases $\R^d$, so in practice by ''everywhere'' we mean the region in which the density is significantly positive.}. On the other hand, the exercise region, $\mathcal{E}_{n}$, has already been approximated (as described in Subsection \ref{DOSalgo}) and the option value in the exercise region is always known. This means that, similar to the LSM, the approximation region can be reduced (in many cases significantly) by only considering samples belonging to the continuation region, $\mathcal{C}_n$. 

Given a sample of regression pairs $\left(x_{t_n}^\text{reg}(m),\,y_{t_n}^\text{reg}(m)\right)_{m=1}^{M_\text{reg}}$, we let $M_\text{reg}^{\mathcal{C}_n}$ denote the number of samples belonging to $\mathcal{C}_n$ and let $\left(x_{t_n}^\text{reg}(m),\,y_{t_n}^\text{reg}(m)\right)_{m=1}^{M_\text{reg}^{\mathcal{C}_n}}$ be our new samples of regression pairs (where the indexation has been appropriately changed). Assuming $M_\text{reg}^{\mathcal{C}_n}\geq1$, we want to find a set of regression coefficients $\boldsymbol{\hat{\alpha}_{t_n}}=(\hat{\alpha}_{t_n}^{(1)},\ldots,\,\hat{\alpha}_{t_n}^{(B)})$ such that the following empirical $L_2$-risk is minimized \begin{equation}\label{L2riskOLS}
    \frac{1}{M_\text{reg}^{\mathcal{C}_n}}\sum_{m=1}^{M_\text{reg}^{\mathcal{C}_n}}\left|\sum_{b=1}^B\alpha_{t_n}^{(b)}\phi_b\left(x_{t_n}^\text{reg}(m)\right)-y_{t_n}^\text{reg}(m)\right|^2.
\end{equation}
For notational convenience, we introduce the compact notation $\boldsymbol{x_{t_n}}=\left(x_{t_n}^\text{reg}(1),\,\ldots,\,x_{t_n}^\text{reg}(M_\text{reg}^{\mathcal{C}_n})\right)$,\\ $\boldsymbol{y_{t_n}}=\left(y_{t_n}^\text{reg}(1),\,\ldots,\,y_{t_n}^\text{reg}(M_\text{reg}^{\mathcal{C}_n})\right)$ and \begin{equation*} \boldsymbol{\phi}(\boldsymbol{x_{t_n}})=\begin{pmatrix}\phi_1\left(x_{t_n}^\text{reg}(1)\right)&\phi_2\left(x_{t_n}^\text{reg}(1)\right)&\cdots&\phi_B\left(x_{t_n}^\text{reg}(1)\right)\\
\phi_1\left(x_{t_n}^\text{reg}(2)\right)&\phi_2\left(x_{t_n}^\text{reg}(2)\right)&\cdots&\phi_B\left(x_{t_n}^\text{reg}(2)\right)\\
\vdots & \vdots & \ddots & \vdots\\
\phi_1\left(x_{t_n}^\text{reg}(M_\text{reg}^{\mathcal{C}_n})\right)& \phi_2\left(x_{t_n}^\text{reg}(M_\text{reg}^{\mathcal{C}_n})\right)&\cdots&\phi_B\left(x_{t_n}^\text{reg}(M_\text{reg}^{\mathcal{C}_n})\right)
\end{pmatrix}.\end{equation*} It is a well-known fact (see \textit{e.g.}, \cite{regression_book}) that $\boldsymbol{\hat{\alpha}_{t_n}}$ is given by
\begin{equation}\label{alphaopt}
    \boldsymbol{\hat{\alpha}_{t_n}}=\left(\boldsymbol{\phi}(\boldsymbol{x_{t_n}})^{T}\boldsymbol{\phi}(\boldsymbol{x_{t_n}})\right)^{-1}\boldsymbol{\phi}(\boldsymbol{x_{t_n}})^{T}\boldsymbol{y_{t_n}},
\end{equation}
where we note that, if we choose linearly independent basis functions, matrix inversion in \eqref{alphaopt} exists almost surely since $X_{t_n}$ has a density\footnote{ In practice we run into troubles if we choose $B$ too high since the approximation of the matrix inverse may be unstable.}. We define the estimator
\begin{equation}\label{vOLS}
    \hat{v}_{B,K}(t_n,\cdot)=\sum_{b=0}^B\hat{\alpha}_{t_n}^{(b)}\phi_b(\cdot).
\end{equation}
If $M_\text{reg}^{\mathcal{C}_n}=0$ we know that all samples are in the exercise region and we simply set $\hat{v}_{B,K}(\cdot)\equiv g_{t_n}(\cdot)$. 
Since the LSM is one of the most established algorithms for valuation of Bermudan options, the theoretical properties are extensively studied and many of the results can also be applied to the algorithm above. However, we first need to make an assumption regarding $M_{\text{reg}}^{\mathcal{C}_n}$. Assume that there exists $c>0$ such that $\Q\{X_{t_n}\in\mathcal{C}_n\}\geq c$. It then holds for any $C\in\R$ that \begin{equation*}
    \Q\left\{\lim_{M_\text{reg}\to\infty}\sum_{m=1}^{M_\text{reg}}\I_{\{X_{t_n}^\text{reg}(m)\in\mathcal{C}_n\}}\geq C\right\} = 1,
\end{equation*}
which implies that $M_{\text{reg}}^{\mathcal{C}_n}$ approaches infinity when $M_\text{reg}$ approaches infinity almost surely.
Since the regression pairs are independently and identically distributed, it holds that $\hat{v}_{B,K}(t_n,X_{t_n})$ converges both in mean square and in probability to $v_{B}(t_n,X_{t_n})$ as $M_\text{reg}$ approaches infinity (see \textit{e.g.}, \cite{white}). To make it more clear when comparing the OLS-approximator of the option value to the neural network approximator (to be defined in the next section), we use the following notation \begin{equation}
    \hat{v}^{\text{OLS}}_{t_n}(\cdot) = \hat{v}_{B,M}(t_n,\cdot),
\end{equation}
where we assume that $B$ and $M$ are chosen such that both accuracy and time complexity are taken into account.

 A nice property of OLS regression is that, given $B$ and a sample of regression pairs, we have a closed-form expression for the optimal parameters and thus also the regression function \eqref{vOLS}. On the other hand, we may face memory or runtime issues for large $B$ and $M_\text{reg}^{\mathcal{C}_n}$ due to \eqref{alphaopt}. This is a problem, especially when we want to approximate a complicated function surface over a large approximation region. For example, consider an option based on 50 correlated assets. If we want to use the first and second-order polynomials as basis functions (including cross-terms) we have $B=\frac{50(50+3)}{2}=1325$, which is often too large for practical purposes. We should also have in mind that this corresponds to an approximation with polynomials of degree 2, which is usually not sufficient for complicated problems. There are however methods to get around this problem, see \textit{e.g.}, \cite{SGBM} in which the state space is divided into several bundles and regression is carried out locally at each bundle. Another suitable method to overcome these difficulties is neural network regression, which is presented in the next section.  

\subsection{Neural network regression}\label{NN_regress}
In this section we present, a simple neural network approximation of $V_{t_n}(\cdot)$. The neural network is a mapping, $v_{\varphi_n}\colon\R^d\to\R$, parametrized by the $p_{t_n}\in\N$ trainable parameters $\varphi_n\in\R^{p_{t_n}}$. The objective is to find $\varphi_n$ such that the empirical $L_2$-risk
\begin{equation}\label{L2riskNN}
    \frac{1}{M_\text{reg}^{\mathcal{C}_n}}\sum_{m=1}^{M_\text{reg}^{\mathcal{C}_n}}\left|v_{\varphi_n}(x_{t_n}^\text{reg}(m))-y_{t_n}^\text{reg}(m)\right|^2
\end{equation}
is minimized. We denote by $\hat{\varphi}_{n}$ an optimized version of $\varphi_n$ and define our neural network approximator of the option price at $t_n$ by \begin{equation*}
    \hat{v}^{\text{NN}}_{t_{n}}(\cdot) = v_{\hat{\varphi}_{n}}(\cdot).
\end{equation*}
To avoid  repetition, the description of the neural networks in Subsection \ref{NN_spec} is also valid for the neural network used here. However, one important difference is the output, which in this section is an approximation of the option value, and should hence take on values in $(0,\infty)$. A natural choice as activation function in the output layer is therefore $\text{ReLU}(\cdot)=\max\{0,\cdot\}$. Furthermore, by shifting the output with $-g_{t_{n}}(\cdot)$, \textit{i.e.,} designing the neural network to output $v_{\hat{\varphi}_n}(\cdot)-g_{t_n}(\cdot)$ and defining $\hat{v}^{\text{NN}}_{t_{n}}(\cdot) = v_{\hat{\varphi}_{n}}(\cdot)+g_{t_n}(\cdot)$ we can, for all $x\in\R^d$, enforce $\hat{v}^\text{NN}(x)\geq g_{t_n}(x)$ by using ReLU as activation function in the output layer. In many cases it seems to be beneficial to use the identity as the activation function in the output layer. This could possibly be explained by the fact that when using the ReLU as activation function, the gradient of the loss function, \eqref{L2riskNN} (with respect to the input) may vanish during training. This, in turn leads to an inefficient use of a gradient descent type algorithm in the optimization problem.

Another difference, which has to do with the training phase, is that the optimization of the parameters $\varphi_n$ does not have to be carried out recursively. This opens up the possibility for parallelization of the code.

By comparing \eqref{L2riskNN} to \eqref{L2riskOLS} we see that the optimization problems are similar. There are, however, some major differences. In Subsection \ref{OLSsec}, we have a closed-form expression for the optimal parameters resulting in the final regression function \eqref{vOLS}. This is not the case for the neural network regression and we therefore need to use an optimization algorithm to approximate the optimal parameters. On the other hand, as mentioned in Subsection \ref{OLSsec} it is sometimes hard to find basis functions that are flexible enough. This problem can be overcome with neural networks, which are known to be good global approximators.

\section{Approximation algorithms for exposure profiles}\label{EEapprox}
In this section, we introduce different ways to estimate \eqref{EE} and \eqref{PFE} relying on Monte-Carlo sampling and the approximation algorithms described in Sections \ref{learnSD} and \ref{learn_pw_optionvalues}. Furthermore, a simple example is presented and visualized, which aims to provide an intuitive understanding of the different methods. Finally, the advantages and disadvantages of each method are presented in a table.

In this section, the neural network-based approximation of the value function of the option, introduced in Subsection \ref{NN_regress} is used. However, it would have been possible to use the OLS-based approximation from Subsection \ref{OLSsec}, instead. 

We use $M\in\N$ independent realizations of $X$, which for $m\in\{1,2,\ldots,M\}$ are denoted by $x(m)=(x_t(m))_{n=0}^{N}$ for $t\in\mathbbm{T}$. When $X$ is given by \eqref{XPQtn}, realization $m$ is denoted by $\tilde{x}^{t_n}(m)=\big(\tilde{x}^{t_n}_t(m)\big)_{t\in[t_0,t_N]}$, where we recall that superscript $t_n$ refers to the point in time where the discontinuity of the drift coefficient is located. We introduce the following two approximations of the expected exposure under the risk neutral measure
\begin{align}\label{EEQ1}
    \hat{\text{EE}}^1_\Q(t_n)&=\frac{1}{M} \sum_{m=1}^{M}\hat{v}^{\text{NN}}_{t_n}(x_{t_n}(m))\I_{\left\{\tau\left(\boldsymbol{f}^{\hat{\boldsymbol{\theta}}}\right)>t_n\right\}},\\\label{EEQ2}
    \hat{\text{EE}}^2_\Q(t_n)&= \frac{1}{M}\sum_{m=1}^{M}\e^{-r\left(\tau_n\left(\boldsymbol{f}^{\boldsymbol{\hat{\theta}}}_n\right)-t_n\right)}g_{\tau_n\left(\boldsymbol{f}^{\boldsymbol{\hat{\theta}}}_n\right)}\left(x_{\tau_n\left(\boldsymbol{f}^{\boldsymbol{\hat{\theta}}}_n\right)}(m)\right)\I_{\left\{\tau\left(\boldsymbol{f}^{\boldsymbol{\hat{\theta}}}\right)> t_n\right\}}.
\end{align}
For the expected exposure under the real world measure, we have the following three approximations
\begin{align}\label{EEP1}
    \hat{\text{EE}}^1_\P(t_n)&=\frac{1}{M}\sum_{m=1}^{M}\hat{v}^{\text{NN}}_{t_n}\left(\tilde{x}^{t_n}_{t_n}(m)\right)\I_{\left\{\tau\left(\boldsymbol{f}^{\boldsymbol{\hat{\theta}}}\right)> t_n\right\}},\\\label{EEP2}
    \hat{\text{EE}}^2_\P(t_n)&= \frac{1}{M}\sum_{m=1}^{M}\e^{-r\left(\tau_n\left(\boldsymbol{f}^{\boldsymbol{\hat{\theta}}}_n\right)-t_n\right)}g_{\tau_n\left(\boldsymbol{f}^{\boldsymbol{\hat{\theta}}}_n\right)}\left(\tilde{x}^{t_n}_{\tau_n\left(\boldsymbol{f}^{\boldsymbol{\hat{\theta}}}_n\right)}(m)\right)\I_{\left\{\tau\left(\boldsymbol{f}^{\boldsymbol{\hat{\theta}}}\right)> t_n\right\}},\\\label{EEP3}
    \hat{\text{EE}}^3_\P(t_n)&=        \frac{1}{M}\sum_{m=1}^{M}\e^{-r\left(\tau_n\left(\boldsymbol{f}^{\boldsymbol{\hat{\theta}}}_n\right)-t_n\right)}g_{\tau_n\left(\boldsymbol{f}^{\boldsymbol{\hat{\theta}}}_n\right)}\left(x_{\tau_n\left(\boldsymbol{f}^{\boldsymbol{\hat{\theta}}}_n\right)}(m)\right)\I_{\left\{\tau\left(\boldsymbol{f}^{\boldsymbol{\hat{\theta}}}\right)> t_n\right\}}l\left(x_{t_n}(m),\,\ldots,\,x_{t_0}(m)\right),
\end{align}
where $l$ is the likelihood ratio function defined in Theorem \ref{EEtheorem}. We note that \eqref{EEQ1} and \eqref{EEP1} are the only approximations that require a calculation of the option value. On the other hand, we need $X$ to be described by a diffusion-type SDE in \eqref{EEP2} and we need to know the density functions to calculate \eqref{EEP3}.
To define the approximations of the potential future exposure, we start by defining the order statistic of $\left(\hat{v}^{\text{NN}}_{t_n}(x_{t_n}(m))\I_{\left\{\tau\left(\boldsymbol{f}^{\boldsymbol{\hat{\theta}}}\right)>t_n\right\}}\right)_{m=1}^{M}$, \textit{i.e.,} the vector given by $\left(\hat{v}^{\text{NN}}_{t_n}(x_{t_n}(\tilde{m}_1))\I_{\left\{\tau\left(\boldsymbol{f}^{\boldsymbol{\hat{\theta}}}\right)>t_n\right\}},\,\ldots,\,\hat{v}^{\text{NN}}_{t_n}(x_{t_n}(\tilde{m}_M))\I_{\left\{\tau\left(\boldsymbol{f}^{\boldsymbol{\hat{\theta}}}\right)>t_n\right\}}\right)$ satisfying $\hat{v}^{\text{NN}}_{t_n}(x_{t_n}(\tilde{m}_i))\leq \hat{v}^{\text{NN}}_{t_n}(x_{t_n}(\tilde{m}_j))$ whenever $i\leq j$. Furthermore, we define \begin{equation*}
    i_\alpha=\begin{cases}\lceil\alpha M\rceil,\quad\text{for }\alpha\geq 0.5,\\
    \lfloor\alpha M\rfloor,\quad\text{for }\alpha< 0.5.
    \end{cases}
\end{equation*}
The approximations of the potential future exposure are then defined as
\begin{align*}
    \hat{\text{PFE}}_\Q^\alpha(t_n)&=\hat{v}^{\text{NN}}_{t_n}(x_{t_n}(\tilde{m}_{i_\alpha}))\I_{\left\{\tau\left(\boldsymbol{f}^{\boldsymbol{\hat{\theta}}}\right)>t_n\right\}},\\
    \hat{\text{PFE}}_\P^\alpha(t_n)&=\hat{v}^{\text{NN}}_{t_n}(\tilde{x}^{t_n}_{t_n}(\tilde{m}_{i_\alpha}))\I_{\left\{\tau\left(\boldsymbol{f}^{\boldsymbol{\hat{\theta}}}\right)>t_n\right\}}.
\end{align*}
In Table \ref{specification}, some characteristics of the calculations needed for each approximation are given.
\begin{table}
    \centering
\begin{tabular}{ |p{1.75cm}||p{2cm}|p{2cm}|p{2cm}|p{2cm}|  }
 \hline
 \multicolumn{5}{|c|}{\textbf{Specification of requirements for each approximation}} \\
 \hline
 &  Approx- imation of the functional form $V_{t_n}(\cdot)$ &Sampling from $\tilde{X}^{t_n}$ & Known density functions& $X$ given must be by diffusion-type SDE \\
 \hline
 \multicolumn{1}{|c||}{$\hat{\text{EE}}_\Q^1$}   &     
 \multicolumn{1}{|c|}{\faCheck}    &\multicolumn{1}{|c|}{\faTimes}&   \multicolumn{1}{|c|}{\faTimes}&\multicolumn{1}{|c|}{\faTimes}\\
 \multicolumn{1}{|c||}{$\hat{\text{EE}}_\Q^2$} &   \multicolumn{1}{|c|}{\faTimes}  & \multicolumn{1}{|c|}{\faTimes}   &\multicolumn{1}{|c|}{\faTimes}& \multicolumn{1}{|c|}{\faTimes}\\
 \multicolumn{1}{|c||}{$\hat{\text{EE}}_\P^1$}  &\multicolumn{1}{|c|}{\faCheck}  & \multicolumn{1}{|c|}{\faCheck}&  \multicolumn{1}{|c|}{\faTimes}& \multicolumn{1}{|c|}{\faTimes}\\
 \multicolumn{1}{|c||}{$\hat{\text{EE}}_\P^2$}     &\multicolumn{1}{|c|}{\faTimes} & \multicolumn{1}{|c|}{\faCheck}&  \multicolumn{1}{|c|}{\faTimes}& \multicolumn{1}{|c|}{\faCheck}\\
 \multicolumn{1}{|c||}{$\hat{\text{EE}}_\P^3$} &   \multicolumn{1}{|c|}{\faTimes}  & \multicolumn{1}{|c|}{\faTimes}&\multicolumn{1}{|c|}{\faCheck}& \multicolumn{1}{|c|}{\faTimes}\\
\multicolumn{1}{|c||}{$\hat{\text{PFE}}_\Q$} & \multicolumn{1}{|c|}{\faCheck}   & \multicolumn{1}{|c|}{\faTimes}   &\multicolumn{1}{|c|}{\faTimes}& \multicolumn{1}{|c|}{\faTimes}\\
 \multicolumn{1}{|c||}{$\hat{\text{PFE}}_\P$}& \multicolumn{1}{|c|}{\faCheck}   & \multicolumn{1}{|c|}{\faCheck}&\multicolumn{1}{|c|}{\faTimes}& \multicolumn{1}{|c|}{\faTimes}\\
 \hline
\end{tabular}
    \caption{A specification of which approximations/entities that are required in order to carry out the different calculations. If required \faCheck, otherwise \faTimes.}
    \label{specification}
\end{table}

To explain the different approximations in a concrete setting we turn to a simple example.
\begin{example}
Consider a one-dimensional American put option, where we for simplicity assume that $r=0$. We are interested in the expected exposure and the potential future exposure at time $t_n\in(t_0,t_N)$ given that we have full knowledge of the market at time $t_0$. We give a short explanation of the intuition behind the different methods by referring to Figure \ref{MC_example}, where the problem is visualized. 

We start by $\hat{\mathrm{EE}}^1_\Q(t_n)$ and $\hat{\mathrm{EE}}^1_\P(t_n)$ for which we only use the figure to the left. For $\hat{\mathrm{EE}}^1_\Q(t_n)$ we follow the blue samples and note that samples 2 and 3 are not exercised prior to, or at $t_n$, which means that the indicator function in \eqref{EEQ1} equals 1. Sample 1, on the other hand, touches the exercise region prior to $t_n$ and has therefore already been exercised, which means that\begin{align*}
    \hat{\mathrm{EE}}^1_\Q(t_n)&=\frac{1}{3}\left(\hat{v}^{\mathrm{NN}}(x_{t_n}(1))\I_{\{\tau(\boldsymbol{f}^{\boldsymbol{\hat{\theta}}})>t_n\}}+\hat{v}^{\mathrm{NN}}(x_{t_n}(2))\I_{\{\tau(\boldsymbol{f}^{\boldsymbol{\hat{\theta}}})>t_n\}}+\hat{v}^{\mathrm{NN}}(x_{t_n}(3))\I_{\{\tau(\boldsymbol{f}^{\boldsymbol{\hat{\theta}}})>t_n\}}\right)\\&=\frac{1}{3}\left(\hat{v}^{\mathrm{NN}}(x_{t_n}(2))+\hat{v}^{\mathrm{NN}}(x_{t_n}(3))\right).\end{align*} When focusing on the red samples instead, we see that no sample touches the exercise region prior to $t_n$ and we obtain $$\hat{\mathrm{EE}}^1_\P(t_n)=\frac{1}{3}\left(\hat{v}^{\mathrm{NN}}(\tilde{x}^{t_n}_{t_n}(1))+\hat{v}^{\mathrm{NN}}(\tilde{x}^{t_n}_{t_n}(2))+\hat{v}^{\mathrm{NN}}(\tilde{x}^{t_n}_{t_n}(3))\right).$$
Similarly, we can \textit{e.g.,} state that $\hat{\mathrm{PFE}}_{\Q}^{2.5}=0$ and $\hat{\mathrm{PFE}}_{\P}^{97.5}=\hat{v}^{\mathrm{NN}}(\tilde{x}^{t_n}_{t_n}(2))$.

Moving on to $\hat{\mathrm{EE}}^2_\P(t_n)$ and $\hat{\mathrm{EE}}^3_\P(t_n)$, we shift focus to the figure to the right. For $\hat{\mathrm{EE}}^2_\P(t_n)$ we want to use the red samples and notice that samples 2 and 3 end up out of the money. We therefore obtain \begin{equation*}
    \hat{\mathrm{EE}}^2_\P(t_n) = \frac{1}{3}g_{\tau_n\left(\boldsymbol{f}^{\boldsymbol{\theta}}_n\right)}\left(\tilde{x}^{t_n}_{\tau_n\left(\boldsymbol{f}^{\boldsymbol{\theta}}_n\right)}(1)\right).
\end{equation*}
For $\hat{\mathrm{EE}}^3_\P(t_n)$, we instead consider the blue samples and see that sample 1 is exercised prior to $t_n$ and samples 2 and 3 end up in the money. However, we also need to adjust the estimate for using the wrong state process\footnote{The samples are generated from the state process under the $\Q-$measure between $t_0$ and $t_n$ which, if not corrected, would be in conflict with the definition of $\mathrm{EE}_\P(t_n)$.}. This is done by multiplying each term with the likelihood ratios $l(x(2))$ and $l(x(3))$ to finally obtain
\begin{equation*}
    \hat{\mathrm{EE}}^3_\P(t_n) = \frac{1}{3}\left(g_{\tau_n\left(\boldsymbol{f}^{\boldsymbol{\theta}}_n\right)}\left(x_{\tau_n\left(\boldsymbol{f}^{\boldsymbol{\theta}}_n\right)}(2)\right)l\left(x(2)\right)+g_{\tau_n\left(\boldsymbol{f}^{\boldsymbol{\theta}}_n\right)}\left(x_{\tau_n\left(\boldsymbol{f}^{\boldsymbol{\theta}}_n\right)}(3)\right)l\left(x(3)\right)\right).
\end{equation*}
The last estimate, $\hat{\mathrm{EE}}^2_\Q(t_n)$, is obtained by removing the likelihood ratios from the estimate for $\hat{\mathrm{EE}}^3_\P(t_n)$.
 
 \begin{figure}[htp]
\centering
\begin{tabular}{ccc}
     \includegraphics[width=82mm]{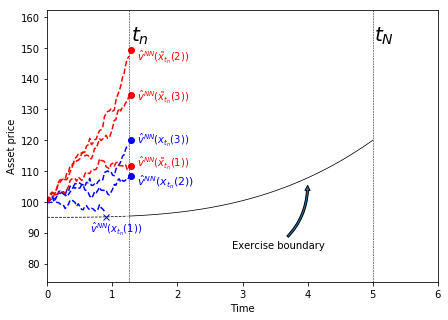}&
          \includegraphics[width=82mm]{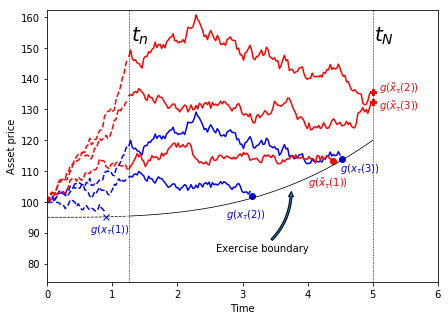}\\
\end{tabular}
\caption{Blue trajectories are distributed as $X$ and red trajectories are distributed as $\tilde{X}^{t_n}$. The boundary for immediate exercise is pointed out and should be interpreted as, as soon a trajectory touches the boundary, the option is exercised and the holder receives the immediate pay-off. Recall that the exercise boundaries are calculated in order to be optimal under the $\Q$-measure.}
\label{MC_example}
\end{figure}
\end{example}
To conclude, we note that Figure \ref{MC_example} displays, to the left, the cases where functional form approximations of the option values are used and to the right the cases where cashflow-paths are used (this can be read in Table \ref{specification}). Furthermore, blue and red trajectories are distributed according to $X$ and $\tilde{X}^{t_n}$, respectively (this can also be read in Table \ref{specification}).

\section{Numerical results}\label{num_res}
This section is divided into two parts: in the first part we use the Black--Scholes dynamics for the underlying assets. The proposed algorithm is compared with two different Monte-Carlo-based algorithms for a two-dimensional case. We focus on both the accuracy of the computed exercise boundaries and the exposure profiles. Furthermore, the exercise boundary is compared with the exercise boundary for the corresponding American option, computed by a finite element method, from the PDE-formulation of the problem. Exposure profiles are then computed both under the risk neutral and the real world measures for problems in 2 and 30 dimensions. Finally, we compare the OLS-regression with the NN-regression. 

In the second part we consider stochastic volatility and compute exposure profiles under the Heston model.
\subsection{Black--Scholes dynamics}
In the Black--Scholes setting the only risk factors are the $d\in\N$, underlying assets, denoted by $S$. We assume a constant risk-free interest rate $r\in\R$, and for each asset $i\in\{1,2,\ldots,d\}$, volatility $\sigma_i\in(0,\infty)$, and constant dividend rate $q_i\in(0,\infty)$. The state process $X$ is then given by the asset process $S=(S_t)_{t\in[t_0,t_N]}$, \textit{i.e.,} $X=S$, following the SDE 
\begin{equation}\label{XA}
    \frac{\d (S_t)_i}{(S_t)_i} = (A_i-q_i)\d t + \sigma_i\d (W_t^\A)_i,\ (S_{t_0})_i=(s_{t_0})_i;\quad t\in[t_0,\,t_N], 
\end{equation}
with initial state $(s_{t_0})_i\in(0,\infty)$, and where $\A$ is either the real world measure $\P$ or the risk neutral measure $\Q$. In the real world setting $A_i=\mu_i\in\R$ and in the risk neutral setting $A_i=r$. The process $W^\A=(W_t^\A)_{t\in[t_0,t_n]}$ is a $d$-dimensional Brownian motion under the measure $\A$. Furthermore, $W^\A$ is correlated with correlation parameters $\rho_{ij}\in[-1,1]$, \textit{i.e.,} for $i,j\in\{1,2,\ldots,d\}$, we have that $\E^\A[\d(W_t^\A)_i \d(W_t^\A)_j]=\rho_{ij}\d t$, with $\rho_{ii}=1$. Moreover, for $t_0\leq u\leq t\leq t_N$ a closed-form solution to \eqref{XA} is given by \begin{equation}
    (S_t)_i=(S_u)_i\text{exp}\left((A_i-q_i-\frac{1}{2}\sigma_{i}^2)(t-u)+ \sigma_{i}\left((W^\A_t)_i-(W^\A_u)_i\right)\right).
\end{equation}
We note that $\log{S}$ has Gaussian increments under both the $\P-$ and the $\Q-$measures which implies that we have access to a closed-form expressions for the transition densities of $S$, and in turn also to the likelihood ratio in \eqref{EEP3}. 
\subsubsection{Bermudan max-call option}\label{Berm_max_call}
 At exercise, a Bermudan max-call option pays the positive part of the maximum of the underlying assets after subtraction of a fixed amount $K\in\R$. This implies an identical pay-off function at all exercise dates, given by \begin{equation*}
    g(s)=\left(\max\left\{s_1,\,s_2\,\ldots,\,s_d\right\}-K\right)^+,
\end{equation*} 
where $s=(s_1,s_2,\ldots,s_d)\in (0,\infty)^d$ and for $c\in\R$, $(c)^+=\max\{c,\,0\}$. 

We choose to focus on Bermudan max-call options for two reasons: first, there exist plenty of comparable results in the literature (see \textit{e.g.,}  \cite{DOS}, \cite{maxcall}, \cite{SGBM}); second, and more importantly, the exercise region is nontrivial with several interesting features. For example, $\max\{s_1,s_2,\ldots,s_d\}$ is not a sufficient statistic for making optimal exercise decisions, meaning that we cannot easily reduce the dimensionality (all dimensions are needed in order to price the option). This is not the case with, \textit{e.g.,} the geometric-average call option with pay-off function $g(s)=\left(\frac{1}{d}\prod_{i=1}^d s_i-K\right)^+$, since $\frac{1}{d}\prod_{i=1}^d s_i$ is a sufficient statistic for optimal exercise decisions (only the geometric average is needed to price the option, meaning that the problem can be reduced to 1 dimension, see \textit{e.g.,}  \cite{MonteCarloFinance}). Another example is the arithmetic-average call option with pay-off function $g(s)=\left(\frac{1}{d}\sum_{i=1}^d s_i-K\right)^+$. Similar to the max-call option, $\frac{1}{d}\sum_{i=1}^d s_i$ is not a sufficient statistic for optimal exercise decisions, but on the other hand the exercise region is convex\footnote{In this section we have assumed that results for exercise regions for American options also hold for their Bermudan counterparts.}$^{,}$\footnote{Convex in the underlying assets for fixed $t$.} (see \cite[Proposition 6.1]{ExRegBroadie}). Convexity of the exercise region does not hold for the max-call option, making it hard to capture the exercise region when global polynomials are used as basis functions in \textit{e.g.,} the LSM. Methods which instead rely on local regression can, to some extent, overcome this problem but it is difficult to decide how the localization (usually localization of the state space) should be done, especially in high dimensions.

In the numerical experiments we use the following parameters $r=0.05$ and for $i,j\in\{1,2,\ldots,d\}$, $q_i=0.1$, $\sigma_{i}=0.2$, for $i\neq j$, $\rho_{ij}=0$, $N=9$, $t_0=0$, $t_N=3$, $(s_{t_0})_i=100$ and $K=100$. We want to emphasize that the choice of having no correlation between assets is due to the fact that this case has been studied thoroughly in the literature (see \textit{e.g.,} \cite{DOS}, \cite{maxcall}, \cite{SGBM}, \cite{ExRegBroadie}). To verify that the algorithm is also able to tackle problems with correlated assets, we replicated an experiment in \cite{SGBMNN}, of pricing a put option on the arithmetic average of 5 correlated assets and obtained the same price\footnote{We obtained the price 0.1804, which is the same price up to the given accuracy of 4 digits, presented in \cite[Parameters as in Set II, results in Table 3 on p. 22]{SGBMNN}.}. 

\subsubsection{Approximation of the option value at initial time}
The performance of the DOS algorithm is thoroughly explored for a wide range of different examples in \cite{DOS}. However, the convergence with respect to the amount of training data used, $M_{\text{train}}$ was not given. We therefore present, in Figure \ref{conv_NN}, an example of how the option price at $t_0$ is converging to a reference value in terms of the amount of training data. In the considered example, we use the parameter values given at the end of Subsection \ref{Berm_max_call} with $d=2$, \textit{i.e.,} a two-dimensional Bermudan max-call option. The reference value (13.902) for $V_{t_0}(x_{t_0})$ is computed by a binomial lattice model in \cite{Binomial_price}. The DOS approximation, denoted by $\hat{V}_{t_0}(x_{t_0})$, is computed according to \eqref{V0_est}. To be more specific, one neural network is trained for each $M_{\text{train}}\in\{2^{12}, 2^{14},2^{16},2^{18},2^{20},2^{22},2^{25}\}$, as described in Subsection \ref{train_NN}. For each $M_{\text{train}}$, the option value, $V_{t_0}(x_{t_0})$ is computed 20 times (with 20 independent realizations of $X$) with $M_\text{val}=2^{20}$ and the average of these 20 values is showed in Figure \ref{conv_NN}. Furthermore, the figure to the right displays empirical $95\%$-confidence intervals of the sample mean, which is computed by adding and subtracting $\frac{1.96}{\sqrt{19}}$ times the sample standard deviation.

 \begin{figure}[htp]
\centering
\begin{tabular}{ccc}
     \includegraphics[width=82mm]{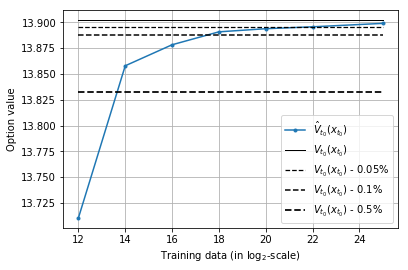}&
          \includegraphics[width=82mm]{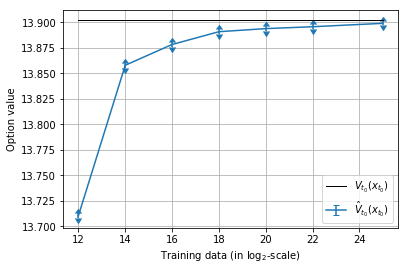}\\
\end{tabular}
\caption{Convergence of approximate option values in the amount of training data. Reference value 13.902, computed by a binomial lattice model in \cite{Binomial_price}. \textbf{Left:} With different levels of deviation from the reference value. \textbf{Right:} With empirical $95\%$-confidence intervals of sample mean.}
\label{conv_NN}
\end{figure}

\subsubsection{Comparison with Monte-Carlo-based algorithms}\label{comp_MC_algo}
We start with a short introduction of the two Monte-Carlo-based algorithms with which we compare results for the two-dimensional max-call option. 

The least squares method (LSM) proposed in \cite{LSM}, is one of the most used methods for pricing American/Bermudan options. The method approximates exercise boundaries and computes the option value from discounted cashflows. The regression is carried out globally, \textit{i.e.,} with the same regression function over the entire state space. However, if one is only interested in the option value at $t_0$, it is beneficial to only consider ITM-samples in the regression state. This is done when the exercise boundary, and $\text{PFE}_{97.5}$ are approximated. For approximating $\text{EE}$ and $\text{PFE}_{2.5}$ we need the entire distribution of future option values forcing us to also include OTM-samples in the regression. We use as basis functions the first 6 Laguerre polynomials $L_n(x)=\e^{-x/2}\frac{\e^x}{n!}\frac{\d^n}{\d x^n}(x^n\e^{-x})$ of $(S_{t_n})_1$ and $(S_{t_n})_2$ and the first 4 powers of $\max\{\log((S_{t_n})_1),\log((S_{t_n})_2)\}$ for all $t_n\in\mathbbm{T}$. Note that since we have no correlation between $(S)_1$ and $(S)_2$, we do not include cross-terms in the basis.

The second algorithm is the stochastic grid bundling method (SGBM) proposed in \cite{SGBM}, in which regression is carried out locally in so-called bundles. In the SGBM the target function in the regression phase is the option value which makes it suitable for approximations of exposure profiles. There are several papers on computations of exposure profiles with the SGBM, see \textit{e.g.,} \cite{SGBM_Heston}.  At each exercise date $t_n\in\mathbbm{T}$, we use as basis functions: a constant, the first 4 powers of $(S_{t_n})_1$ and $(S_{t_n})_2$, and the first 2 powers of $\max\{\log((S_{t_n})_1),\log((S_{t_n})_2)\}$. Furthermore, the state space is divided into 32 equally-sized bundles based on $\max\{(S_{t_n})_1,(S_{t_n})_2\}$. 

Before presenting any results, we recall from equations \eqref{EE} and \eqref{PFE} that the expected exposure and the potential future exposure are two statistics of \begin{equation}V_{t_n}(S_{t_n})\I_{\{\tau>t_n\}},\label{IV}\end{equation}
where $\tau$ is an $S-$stopping time. This means that approximations of the exposure profiles are sensitive to, not only the option value itself, but also the exercise strategy. The SGBM and LSM only compute the exercise strategy implicitly, \textit{i.e.,} by comparing the approximate option value and the immediate pay-off. Therefore small errors of the approximate option value close to the exercise boundary can lead to significant errors in the exercise strategy\footnote{ In our experiments the errors of the approximate option values seem be of the same sign locally \textit{i.e.,} the polynomial basis function underestimates the option value in some regions and overestimates the option value in other regions.}. We therefore start by presenting a comparison of the exercise boundaries. As a reference, we use the exercise boundary for the corresponding American option, which is computed from the PDE-formulation with the finite element method used in \cite{FEM_ex_reg}. We note that since the PDE formulation refers to the American option, the exercise boundary differs slightly\footnote{In fact, the continuation region for an American option is a subset of the continuation region for the Bermudan counterpart.} from the exercise boundary of the Bermudan counterpart, which we are interested in. 

In Figure \ref{test_N9}, a comparison of the exercise boundaries at $t_8\approx 2.67$ for the different algorithms is presented. As we can see, the DOS algorithm captures the shape of the exercise regions while both the SGBM and the LSM seem to struggle, especially with the part of the continuation region along (and around) the line $(S_{t_n})_1=(S_{t_n})_2$, in particular for high values of $(S_{t_n})_1$ and $(S_{t_n})_2$. The irregular shaped features in the continuation region for the SGBM are a consequence of the local regression. In particular, the triangular shapes come from the specific choice of bundling rule (bundling based on $\max\{(S_{t_n})_1,(S_{t_n})_2\}$).

Moving on to the exposure profiles, we see in Figure \ref{diff_algo}, that even though the DOS algorithm and the SGBM seem to agree on the exposure profile in general, we notice a difference in the $\text{PFE}_{97.5}$. This is a consequence of the slight bias towards classifying samples as belonging to the continuation region, which is shown in Figure \ref{diff_algo}, to the right. 

The LSM is performing worse, both in terms of accuracy of exposure profiles and bias towards miss-classification. This is however not a surprise, since the LSM is tailored to calculate the option value at $t_0$.

Finally, it should be pointed out that for both the SGBM and LSM, it could very well be the case that another set of basis functions would better capture the shape of the exercise boundaries. In this two-dimensional example one could probably use geometric intuition to come up with a better set of basis functions, but in higher dimensions, and for more complicated pay-off functions, this becomes difficult.

\begin{figure}[htp]
\centering
\begin{tabular}{ccc}
     \includegraphics[width=82mm]{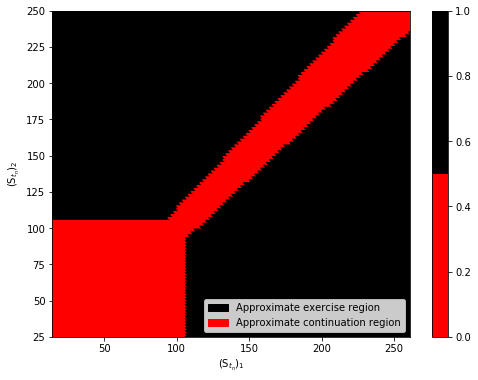}  & \includegraphics[width=82mm]{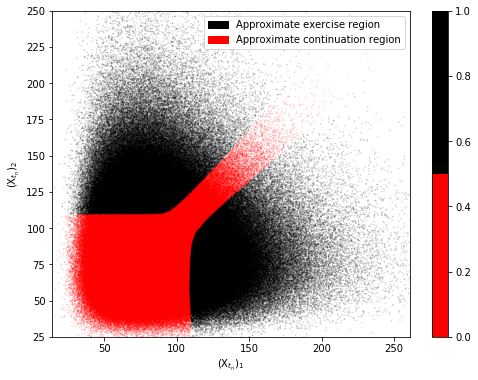}\\
          \includegraphics[width=82mm]{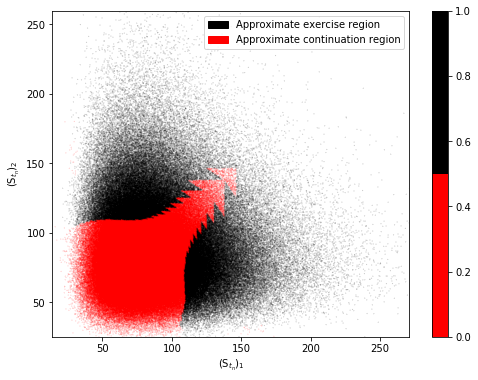} &  \includegraphics[width=82mm]{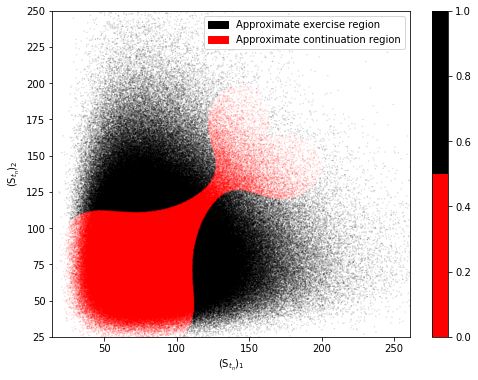}\\
\end{tabular}
\caption{Approximate exercise boundaries for a two-dimensional max-call option at $t_8\approx2.67$. \textbf{From top left to bottom right:} FEM (American option), DOS, SGBM and LSM respectively.}
\label{test_N9}
\end{figure}

 \begin{figure}[htp]
\centering
\begin{tabular}{ccc}
     \includegraphics[width=82mm]{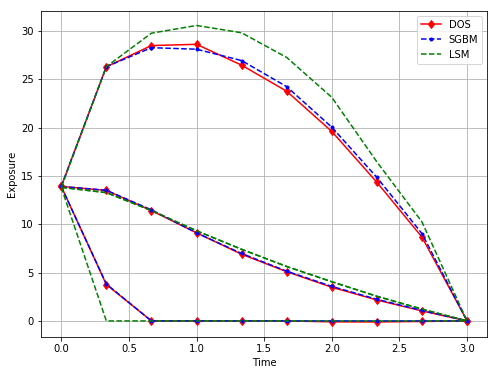}&
          \includegraphics[width=82mm]{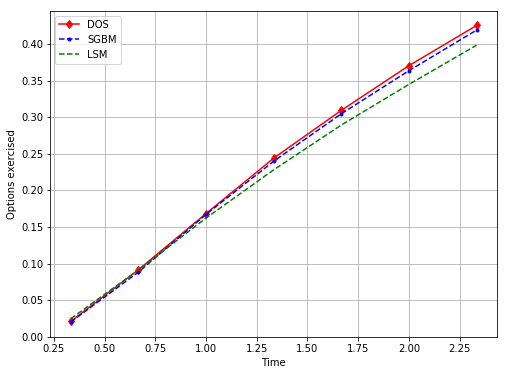}\\
\end{tabular}
\caption{Comparison of DOS, SGBM and LSM for a two-dimensional max-call option. \textbf{Left:} Expected exposure and potential future exposures at 97.5\%- and 2.5\%-levels. \textbf{Right:} Proportion of options exercised at different exercise dates.}
\label{diff_algo}
\end{figure}

\subsubsection{Exposure profiles under different measures}
In this section we compare exposure profiles under different measures for the max-call option, in 2 and 30 dimensions. In \textit{Case I} we set $d=2$ and $\P^1$ and $\P^2$ such that for $i\in\{1,2\}$, we have drifts $(\mu_1)_i=15\%$ and $(\mu_2)_i=-5\%$. In \textit{Case II} we set $d=30$ and $\P^1$ and $\P^2$ such that such that for $i\in\{1,\ldots,30\}$, we have drifts $(\mu_1)_i=7.5\%$ and $(\mu_2)_i=2.5\%$.

Figure \ref{Exposure_prof_2} shows exposure profiles in 2 and 30 dimensions on the left side. On the right, we see a comparison of the different ways to compute expected exposures which all agree to high accuracy. Furthermore, Figure \ref{Ex_freq_test}, displays that the fraction of exercised options over time is highly dependent on the choice of measure.

\begin{figure}[htp]
\centering
\begin{tabular}{ccc}
     \includegraphics[width=82mm]{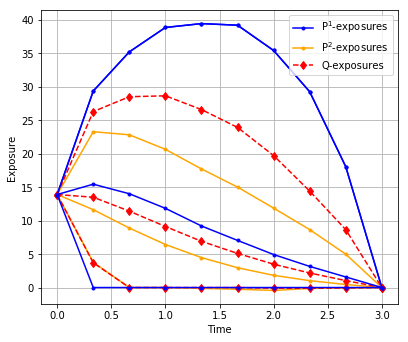}  & \includegraphics[width=82mm]{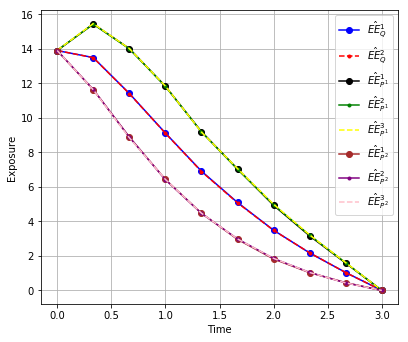}\\
          \includegraphics[width=82mm]{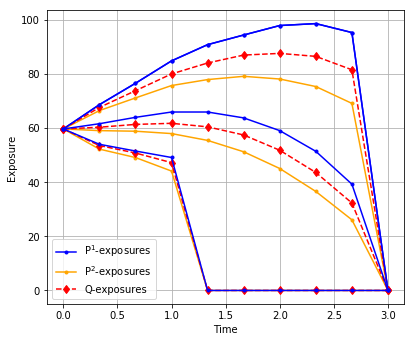} & \includegraphics[width=82mm]{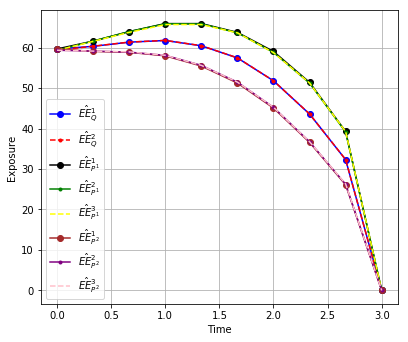}\\
\end{tabular}
\caption{Approximate exposures, at the exercise dates over the life time of the contract. \textbf{Top:} Case I. \textbf{Bottom:} Case II. \textbf{Left:} For $i=1$ and $i=2$, $\hat{\text{PFE}}^{97.5}_{\P^i}$, $\hat{\text{PFE}}^{97.5}_\Q$, $\hat{\text{EE}}_{\P^i}^1$, $\hat{\text{EE}}_\Q^1$  $\hat{\text{PFE}}^{2.5}_{\P^i}$ and $\hat{\text{PFE}}^{2.5}_\Q$. \textbf{Right:} For $i=1$ and $i=2$, $\hat{\text{EE}}_\Q^1$, $\hat{\text{EE}}_\Q^2$, $\hat{\text{EE}}_{\P^i}^1$, $\hat{\text{EE}}_{\P^i}^2$ and $\hat{\text{EE}}_{\P^i}^3$.}
  \label{Exposure_prof_2}
  \end{figure} 

\begin{figure}[htp]
\centering
\begin{tabular}{ccc}
     \includegraphics[width=91mm]{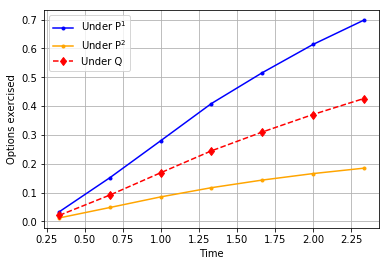} 
\end{tabular}
\caption{Proportion of samples exercised at different exercise dates under measures $\Q,\ \P^1$ and $\P^2$ for Case I.}
\label{Ex_freq_test}
\end{figure}

\subsubsection{Comparison of the OLS-regression and the NN-regression for approximation of pathwise option values}

Finally, we compare the performance of the OLS-regression, with the NN-regression, introduced in Subsections \ref{OLSsec} and \ref{NN_regress}. We emphasise that both OLS-regression and NN-regression are regressing the current state on discounted cashflow paths approximated by the DOS algorithm (described in Section \ref{learnSD}). They can therefore be seen as phase 2 of an algorithm producing pathwise option values. 

After conducting numerical experiments on a variety of different examples we conclude that the expected exposures are very similar for the two regression methods. However, the potential future exposure is not always captured by the OLS-regression. The difficulty lies in finding a set of computationally feasible basis functions, flexible enough to accurately capture a complicated function surface on a large domain (similar problem as for the LSM in Subsection \ref{comp_MC_algo}). To overcome this problem, we also implement a slightly different version of the algorithm, where we instead carry out local regression in the continuation regions. To be able to differentiate between the local and global OLS-regressions, we denote the regression functions by $v^{\text{OLS}_{\text{loc}}}$ and  $v^{\text{OLS}_{\text{glob}}}$, respectively. The localization procedure is done similarly as in the SGBM, \textit{i.e.,} at each $t\in\mathbbm{T}$, the state space is divided into bundles of equal size (in terms of the number of samples in each bundle), based on $\max\{(S_t)_1,(S_{t})_2\}$. With local OLS-regression we obtain almost identical exposure profiles as with NN-regression. In Figure \ref{OLS_NN}, on top to the left, the exposure profiles computed with the three different algorithms are displayed. Furthermore, from top right to bottom right, we compare the approximate risk premia for holding instead of immediately exercising the option at $t_2\approx0.667$ and some $x\in\R^2$, \textit{i.e.,} $v_{t_2}^{Z}(x)-g_{t_2}(x)$, with $Z$ representing, in order $\text{NN}$, $\text{OLS}_{\text{glob}}$ and $\text{OLS}_{\text{loc}}$. We know that for all $x\in\R^2$, it holds that $V_{t_2}^{Z}(x)-g_{t_2}(x)\geq0$. We see in Figure \ref{OLS_NN}, top right, that this is captured by the NN-regression, since the values range from 0 to just above 12. When we carefully evaluate the values of the risk premia computed with local and global OLS-regression (Figure \ref{OLS_NN} bottom left and bottom right) we see that negative values exist in both plots. We see similar phenomena for high values, \textit{i.e.,} the range is stretched upwards in comparison to the values obtained with NN-regression. The reason for the negative values is that $v^{\text{OLS}_\text{loc}}$ and $v^{\text{OLS}_\text{glob}}$ underestimate the option values close to the boundary (which in this case coincides with the exercise boundary since the regression is carried out only in the continuation region). To compensate for this, we see a tendency of higher values in the center (not close to the exercise boundaries) of the continuation regions. As a final remark, we note that this behaviour is reduced by using local regression instead of global regression (the range of values in Figure \ref{OLS_NN} is tighter for local than global regression). On the other hand, we note discontinuities in Figure \ref{OLS_NN} bottom, left, stemming from the localized regression, of the risk premium computed with the local OLS-regression. The discontinuity in Figure \ref{OLS_NN}, bottom right, is a boundary issue of the OLS-regression (regression is only carried out in the continuation region since we set the value equal to the immediate pay-off in the exercise region). 

Even though, the local OLS-regression is less accurate when it comes to computations of exposure profiles than the other two algorithms in this section, it is clear by comparing Figures \ref{diff_algo} and \ref{OLS_NN}, that it outperforms the LSM. This is not a surprise since:
\begin{enumerate}
    \item The accumulated error in the LSM (because of recursive dependency of the regression functions) is significantly reduced since the regression functions are not sequentially dependent. The discounted cashflows are projected onto the risk factors directly. This implies that the only error accumulation (over time), in the OLS-regression originates from the DOS algorithm, which computes the exercise boundaries with high accuracy;
    \item By recalling equation \eqref{IV}, we note that a less accurate stopping strategy may cause a less accurate exposure.  
\end{enumerate}

\begin{figure}[htp]
\centering
\begin{tabular}{ccc}
     \includegraphics[width=82mm]{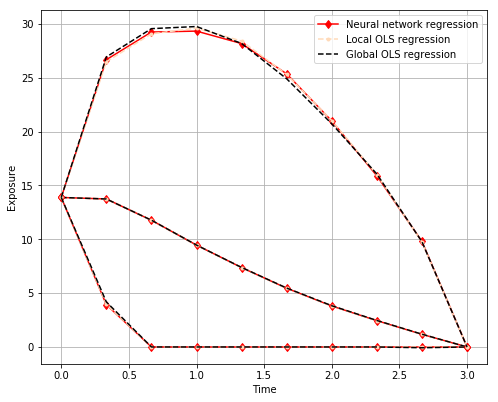}  & \includegraphics[width=82mm]{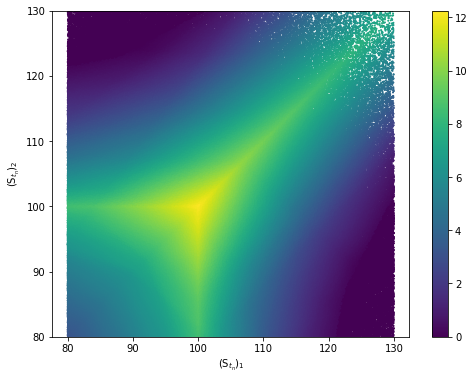}\\
          \includegraphics[width=82mm]{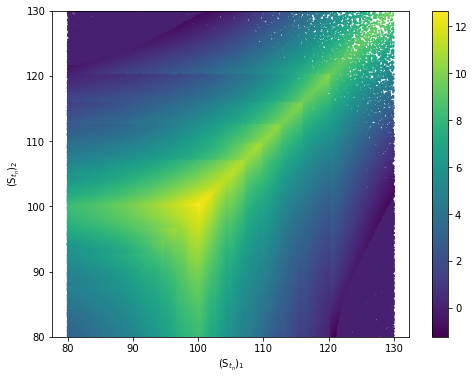}  & 
          \includegraphics[width=82mm]{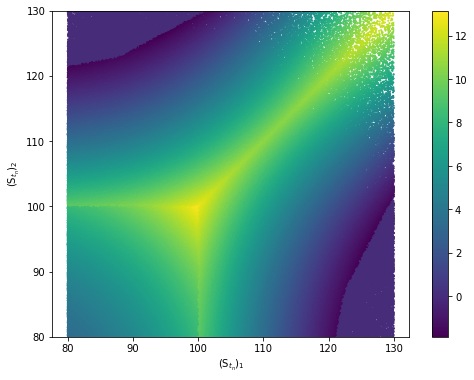}\\
\end{tabular}
\caption{Comparison of NN-regression, local OLS-regression and global OLS-regression for a two-dimensional max-call option. For all three regression techniques, the DOS is used to approximate the optimal stopping strategy. \textbf{Top left:} Exposure profiles. \textbf{From top right to bottom left:} Approximate risk premium for holding option instead of immediate exercise at $t_2\approx0.667$, \textit{i.e.,} $v_{t_2}^{Z}(\cdot)-g_{t_2}(\cdot)$, with $Z$ representing $\text{NN}$, $\text{OLS}_{\text{loc}}$ and $\text{OLS}_{\text{glob}}$, respectively.}
\label{OLS_NN}
\end{figure}

\subsection{Heston model dynamics}
In this section we assume a one-dimensional underlying asset following the Heston stochastic volatility model \cite{Heston}, which is considered only under the risk neutral measure. We therefore omit the explicit notation of the probability measure used in this section.
In this setting, the market is described by, not only the underlying asset price process $S=(S_t)_{t\in[t_0,t_N]}$ itself , but also by the instantaneous variance process $\nu=(\nu_t)_{t\in[t_0,t_N]}$. The state process is then the two-dimensional process $X=(\nu,S)$ which satisfies the system of SDEs  
\begin{align}\label{S_Heston}
    \d S_t &= (r-q)S_t\d t + \sqrt{\nu_t}S_t\d W_t^{S},\ S_{t_0}=s_{t_0};\quad t\in[t_0,\,t_N],\\
    \d\nu_t&=\kappa(\theta-\nu_t)\d t + \xi\sqrt{\nu_t}\d W_t^{\nu},\ \nu_{t_0}=\nu_0;\quad t\in[t_0,t_N],\label{v_Heston}
\end{align}
with risk-free interest rate $r\in\R$, dividend rate $q\in(0,\infty)$, initial conditions $s_{t_0},\nu_0\in(0,\infty)$, speed of mean reversion $\kappa\in(0,\infty)$, long term mean of the variance process $\theta\in(0,\infty)$, and volatility coefficient of the variance process $\xi\in (0,\infty)$. Furthermore, $(W_t^S)_{t\in[t_0,t_N]}$ and $(W_t^{\nu})_{t\in[t_0,t_N]}$ are two one-dimensional, standard Brownian motions satisfying $\E[\d W_t^S\d W_t^\nu]=\rho_{\nu,S}\d t$ for some correlation parameter $\rho_{\nu,S}\in(-1,1)$. We, however notice that it is important to be careful when using the Heston model, since for some parameters, moments of higher order than 1 can become infinite in finite time (see \cite[Proposition 3.1]{Pieterbarg}). Equation \eqref{v_Heston} is the SDE for the well-established Cox--Ingersoll--Ross (CIR) process, introduced in \cite{CIR}. When the so-called Feller condition \begin{equation*}
    2\kappa\theta\geq \xi^2, 
\end{equation*}
is satisfied, it holds that 0 is an unattaiable boundary for $\nu$. If the Feller condition is not satisfied, then 0 is an attainable, but strongly reflective\footnote{ Strongly reflective in the sense that the time spent at 0 is of Lebesgue measure zero, see \textit{e.g.,} \cite{Pieterbarg}.} boundary, see \textit{e.g.,} \cite{Feller}. This leads to an accumulation of probability mass around zero, which makes it more challenging to approximate $\nu$ accurately. Unfortunately, the Feller condition is rarely satisfied for parameters calibrated to the market.  

In this paper, we use the QE-scheme, proposed in \cite{QE_Heston} is used to approximate\footnote{For notational convenience, the state process is denoted by $X$, which falsely indicates that we have an exact form for $X$. This is because our focus is on approximating option values, not the underlying state process. It should however be mentioned that $X$ needs to be approximated in this section.} $(\nu,S)$. If necessary, we choose a finer time grid for the approximation of $(S,\nu)$ than the exercise dates, $\mathbbm{T}$. 

We consider a standard Bermudan put option, \textit{i.e.,} identical pay-off functions at all exercise dates, only depending on the underlying asset. Furthermore, the pay-off function for the Bermudan put option is given by \begin{equation*}
    g(s)=(K-s)^+,
\end{equation*} 
for $s\in(0,\infty)$, and strike $K\in\R$.

\subsubsection{Comparison with Monte-Carlo-based algorithms}
In this section, we again compare the DOS algorithm with the two Monte-Carlo-based algorithms, SGBM and LSM. We use the following set of model parameters: $r = 0.04$, $q=0$, $s_{t_0}=100$, $\kappa=1.15$, $\theta = 0.0348$, $\xi = 0.459$, $\nu_0 = 0.0348$ and $\rho_{\nu,S} = -0.64$, and the contract parameters: $T=0.25$, $N=10$, $K=100$. The parameters coincide with Set B in \cite{Fourier_Ruijter}, in which the valuation is carried out with the so-called 2D-COS method. The 2D-COS method is a Fourier-based method and is assumed to yield highly accurate valuation of the option. 

For the LSM, we use as basis functions, for $t_n\in\mathbbm{T}$, Laguerre polynomials of degree 3 of $S_{t_n}$, Laguerre polynomials of degree 3 of  $\nu_{t_n}$ and $\nu_{t_n}S_{t_n}$ (only on constant basis function is used). For the SGBM, we use 32 equally-sized bundles based on $S_{t_n}$ and for $t_n\in\mathbbm{T}$, we use as basis functions a constant, $S_{t_n}$, $S_{t_n}\nu_{t_n}$ and the first 3 powers of $\nu_{t_n}$. These parameters are chosen such that the approximate option value at $t_0$ are as close as possible to the (almost) exact value of 3.198944, for a Bermudan option with 10 exercise dates, retrieved from \cite{Fourier_Ruijter}. The obtained option values (for each algorithm, average value of 10 runs) at $t_0$ were 3.1792, 3.2033, and 3.1222 for the DOS algorithm, the SGBM and the LSM, respectively. 

It should be stressed that the numerical results for the LSM and the SGBM in this section should not be seen as state of the art performance of the algorithms. For example, in \cite{SGBM_Heston}, a bundling scheme based on recursive bifurcation and rotation of the state space to match the correlation between $S$ and $\nu$ gave accurate results. Furthermore, they use as basis functions only monomials of $\log{(S_{t_n})}$ and letting the stochastic variance $\nu_{t_n}$ enter only through conditional expectations of the form $\E[\left(\log(S_{t_n})\right)^k\,|\,S_{t_{n-1}},\nu_{t_{n-1}}]$. These conditional expectations are computed from the characteristic function of the $S_{t_n}\,|\,S_{t_{n-1}},\nu_{t_{n-1}}$ which was presented in \cite{Heston}. Similar improvements could also be done for the LSM. The reason for this comparison to still be relevant is to demonstrate the flexibility of DOS and the NN-regression, in which nothing has been changed (from the examples with Black--Scholes dynamics) except the dynamics of the stochastic process from which we generate training data.

In Figure \ref{ex_reg_Heston}, the exercise boundaries are presented in the state space. Worth noticing is that for $t_n\in\mathbbm{T}$, both the option value and the pay-off functions are increasing in $\nu_{t_n}$ and decreasing in $S_{t_n}$. An immediate consequence of this is that we can have at most one exercise boundary along the lines with constant $S_{t_n}$ or constant $\nu_{t_n}$. We can therefore conclude inconsistencies in the exercise boundaries for both the LSM and the SGBM. In Figure \ref{ex_reg_Heston}, on the bottom line to the right, the empirical probability density functions (pdf) of the exposures at the exercise dates are plotted.
\begin{figure}[htp]
\centering
\begin{tabular}{ccc}
     \includegraphics[width=82mm]{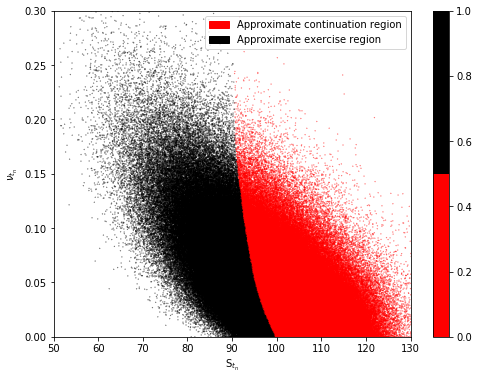}  & \includegraphics[width=82mm]{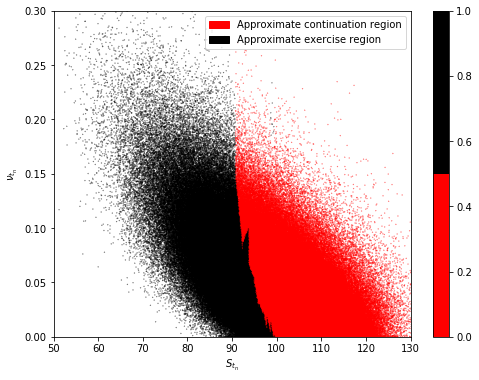}\\
          \includegraphics[width=82mm]{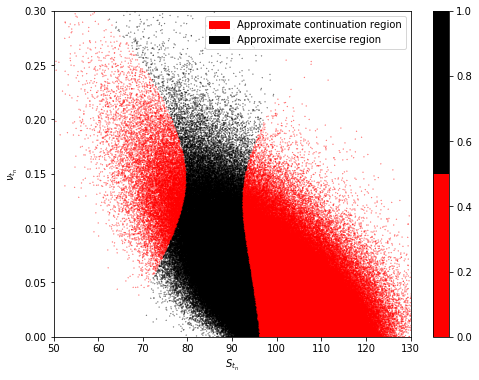}  & 
          \includegraphics[width=82mm]{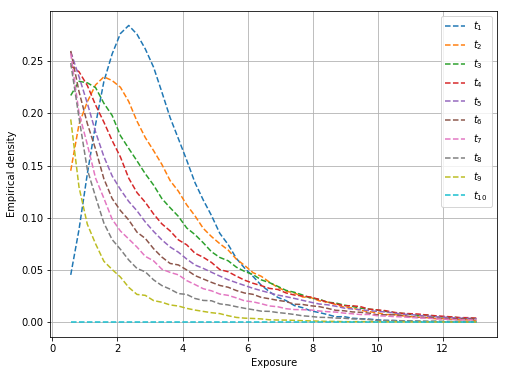}\\
\end{tabular}
\caption{Approximate exercise boundaries for a Bermudan put option under the Heston model at $t_9=0.225$ and the empirical density for the exposure at all exercise dates. \textbf{From top left to bottom right:} DOS, SGBM, LSM and the empirical density of the exposure.}
\label{ex_reg_Heston}
\end{figure}
Figure \ref{diff_algo_Heston} shows the exposure profiles and the exercise frequency computed with the three algorithms. We note that the DOS and the SGBM seem to agree fairly well on the EE and the lower percentile of the PFE but differ significantly on the upper PFE. 
\begin{figure}[htp]
\centering
\begin{tabular}{ccc}
     \includegraphics[width=82mm]{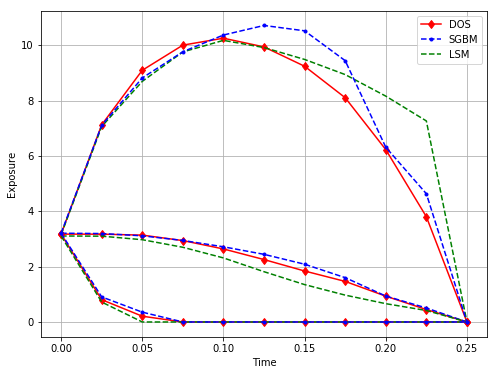}&
          \includegraphics[width=82mm]{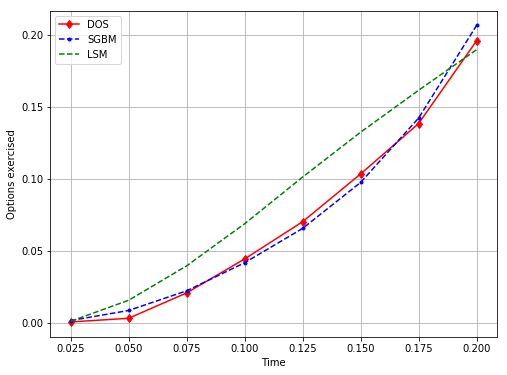}\\
\end{tabular}
\caption{Comparison of the DOS algorithm, the SGBM and the LSM for a Bermudan put option under the Heston model. \textbf{Left:} Expected exposure and potential future exposures at 97.5\%- and 2.5\%-levels. \textbf{Right:} Proportion of options exercised at different exercise dates.}
\label{diff_algo_Heston}
\end{figure}

\section*{Acknowledgments}
This project is part of the ABC-EU-XVA project and has received funding from the European Unions Horizon 2020 research and innovation programme under the Marie Sk\l dowska--Curie grant agreement No 813261. Furthermore, we are greatful for discussions with Adam Andersson, Andrea Fontanari, Lech A. Grzelak, and Shashi Jain regarding the content of this paper.

\end{document}